\title{Locally computing edge orientations}
\author{Slobodan Mitrovi\'c \\ University of California, Davis\thanks{Supported by the Google Research Scholar and NSF Faculty Early Career Development Program No.~2340048. \texttt{smitrovic@ucdavis.edu}}
\and
Ronitt Rubinfeld \\ Massachusetts Institue of Technology\thanks{Supported by NSF awards CCF-2006664, DMS-2022448 and CCF-2310818. \texttt{ronitt@csail.mit.edu}}
\and
Mihir Singhal \\ University of California, Berkeley\thanks{\texttt{mihirs@berkeley.edu}}
}
\newcommand{\f}{\frac}
\newcommand{\Ex}{{\mathbb{E}}}
\newcommand{\mcA}{{\mathcal{A}}}
\newcommand{\Aa}{{\mathcal{A}}}
\newcommand{\cA}{{\mathcal{A}}}
\newcommand{\mc}{\mathcal}
\newcommand{\eps}{\epsilon}
\newcommand{\p}[1]{\left( #1 \right)}
\newcommand{\paf}[2]{\left( \frac{#1}{#2} \right)}
\newcommand{\todone}[1]{}
\newcommand{\x}{\text}
\newcommand{\D}{\Delta}
\newcommand{\e}{\eps}
\newcommand{\Ot}{{\tilde{O}}}
\newcommand{\rb}[1]{\left( #1 \right)}
\definecolor{darkgreen}{rgb}{0,0.5,0}
\newtheorem{theorem}{Theorem}[section]
\newtheorem{prop}[theorem]{Proposition}
\newtheorem{claim}[theorem]{Claim}
\crefname{thm}{Theorem}{Theorems}
\Crefname{lemma}{Lemma}{Lemmas}
\Crefname{invariant}{Invariant}{Invariants}
\Crefname{claim}{Claim}{Claims}
\Crefname{observation}{Observation}{Observations}
\Crefname{algorithm}{Algorithm}{Algorithms}
\Crefname{algocfline}{Algorithm}{Algorithms}
\Crefname{figure}{Figure}{Figures}
\Crefname{challenge}{Challenge}{Challenges}
\Crefname{prop}{Proposition}{Propositions}
\newcommand*{\algotitle}[2]{%
  \stepcounter{algocf}%
  \hypertarget{algocf.title.\theHalgocf}{}%
  \NR@gettitle{#1}%
  \label{#2}%
  \addtocounter{algocf}{-1}%
}
\DeclareMathOperator{\poly}{poly}
\begin{document}
\maketitle

\begin{abstract}
We consider the question of orienting the edges in a graph $G$ such that every vertex has bounded out-degree.
For graphs of arboricity $\alpha$, there is an orientation in which every vertex has out-degree at most $\alpha$ and, moreover, the best possible maximum out-degree of an orientation is at least $\alpha - 1$. 
We are thus interested in algorithms that can achieve a maximum out-degree of close to $\alpha$. 
A widely studied approach for this problem in the distributed algorithms setting is a ``peeling algorithm'' that provides an orientation with maximum out-degree $\alpha(2+\epsilon)$ in a logarithmic number of iterations.

We consider this problem in the local computation algorithm (LCA) model, which quickly
answers queries of the form ``What is the orientation of edge $(u,v)$?'' by probing the input graph. 
When the peeling algorithm is executed in the LCA setting by applying standard techniques, e.g., the Parnas-Ron paradigm, it requires $\Omega(n)$ probes per query on an $n$-vertex graph. 
In the case where $G$ has unbounded degree, we show that
any LCA that orients its edges to yield maximum out-degree
$r$ must use $\Omega(\sqrt n/r)$ probes to $G$ per query in the worst case, 
even if $G$ is known to be a forest (that is, $\alpha=1$).
We also show several algorithms with sublinear probe complexity when $G$ has unbounded degree. 
When $G$ is a tree such that the maximum degree $\Delta$ of $G$ is bounded, we demonstrate an algorithm that uses $\Delta n^{1-\log_\Delta r + o(1)}$ probes to $G$ per query. To obtain this result, we develop an edge-coloring approach that ultimately yields a graph-shattering-like result. 
We also use this shattering-like approach to demonstrate an LCA which $4$-colors any tree using sublinear probes per query.
\end{abstract}

\section{Introduction}
Orienting graph edges while obeying certain constraints has applications in many computational settings. 
For instance, low-out-degree orientation has been studied in the dynamic~\cite{brodal1999dynamic,he2014orienting,berglin2020simple,solomon2020improved,christiansen2022dynamic}, distributed~\cite{ghaffari2017distributed,fischer2017deterministic,ghaffari2018derandomizing,harris2019distributed,su2020distributed,harris2021locality}, and massively parallel computation~\cite{ghaffari2019improved,biswas2022massively} settings.
The general problem of graph orientation is of significant interest as it serves as an important algorithmic tool for other computational problems.
In their celebrated result, when the input is given as a rooted tree where each edge is oriented toward its parent, Cole and Vishkin~\cite{cole1986deterministic} show how to $3$-color a tree in only $O(\log^* n)$ many distributed rounds.
The algorithm of \cite{solomon2020improved} employs low-out-degree graph orientation to obtain a dynamic algorithm for graph coloring, and the works \cite{neiman2015matching,he2014orienting,bernstein2016faster} apply results from dynamic edge orientation in designing algorithms for matching.
To dynamically maintain spanners, the work of \cite{bodwin2016fully} develops a method that also relies on graph orientation. The authors of \cite{even2014deterministic} design a local computation algorithm (LCA) for bounded-reachability orientations (a different class of orientations) to develop an efficient LCA for coloring. 
It is well-known that the low-out-degree orientation and densest subgraph are problems that are dual to each other~\cite{charikar2000greedy,bahmani2014efficient}.
Moreover, graph orientation has applications in small subgraph counting and listing~\cite{bera2019linear,biswas2022massively}.
Our work focuses on low-out-degree orientation in the context of LCAs.

\subsection{The problem of \texorpdfstring{$r$}{r}-orientation}

Given a graph $G$ with $n$ vertices, we consider the problem of orienting its edges such that every vertex has out-degree at most $r$, for some parameter $r$. 
We call such an orientation an \textit{$r$-orientation}.
Letting $\alpha$ denote the arboricity and $\rho$ the pseudo-arboricity of $G$, the best possible achievable value of $r$ is $r=\rho$, and it is also known that $\rho \le \alpha \le \rho + 1$~\cite{picard1982network,westermann1988efficient,blumenstock2020constructive}. 
However, we are also interested in approximation algorithms where we can achieve some value $r \ge \rho$. Since $\rho$ and $\alpha$ differ by at most one, we will mostly compare $r$ to $\alpha$ instead of $\rho$ since it will make some of our algorithms easier to describe.

A line of work~\cite{ghaffari2017distributed,fischer2017deterministic,ghaffari2018derandomizing,su2020distributed,harris2019distributed,harris2021locality} studied this problem in the distributed LOCAL model in which, under different conditions on $\eps \alpha$, they demonstrate algorithms that use $O(\poly \log n)$ rounds to achieve a value of $r=\alpha(1+\eps)$. In particular, Su and Vu~\cite{su2020distributed} provide such an orientation in $\tilde{O}(\log^2 n / \eps^2)$ rounds, while Harris, Su, and Vu~\cite{harris2021locality} improve the dependence on $1/\eps$ from quadratic to linear at the expense of an additional $\log n$ factor in the round complexity.

We consider this problem in the local computation algorithm (LCA) model (fully defined in \cref{sec:lca}), in which the algorithm must be able to orient any input edge such that many copies of the algorithm will, with no interaction between them except for a shared random string, produce a consistent orientation.
To the best of our knowledge, the low-out-degree orientation problem has not been previously studied
in the LCA setting.

A common method for obtaining LCAs from distributed algorithms is the Parnas-Ron paradigm \cite{parnas2007approximating}. 
As we will discuss further in \cref{sec:related}, in the regime we consider, the Parnas-Ron paradigm does not give any nontrivial sublinear-time algorithm.

\subsection{Our results}

As our first result, we show that an LCA that finds an $r$-orientation (even when $r$ depends on $n$) requires at least $\Omega(n^{1/2}/r)$ probes per query, even if the algorithm is randomized and the input graph is a forest (with $\alpha=1$).
\begin{theorem}[Rephrasing of \cref{prop:lower-bound}] \label{thm:lower-bound-arb}
    For any parameter $r$, any LCA randomized algorithm that yields an $r$-orientation with probability at least 0.9 must use at least $\Omega(n^{1/2}/r)$ probes per query in the worst case.
\end{theorem}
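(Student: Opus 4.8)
The plan is to exhibit a distribution over forests on $n$ vertices such that any deterministic LCA making few probes cannot consistently answer orientation queries. By Yao's principle, a lower bound against deterministic algorithms on a hard input distribution implies the same bound (up to constants) against randomized LCAs that succeed with probability $0.9$, so I would first reduce to the deterministic case.

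The hard instance I would use is a ``spider'' or ``double star''-like construction: take a central structure and attach many long paths, or more simply, consider a forest that is a disjoint union of stars whose centers are themselves connected in a path, tuned so that a valid $r$-orientation forces a kind of global consistency. The cleanest version: let $G$ be a single path $v_1 - v_2 - \cdots - v_n$ (a tree with $\alpha = 1$), but with vertex labels/IDs assigned so that the LCA cannot tell locally which ``side'' of the path it is on. In any $r$-orientation of a path, consider a maximal run of consecutive edges all oriented the same way; such a run has length at most $r$ at each endpoint's constraint, so in fact within any window of $2r+1$ consecutive edges there must be a vertex of out-degree $0$ — a ``sink'' — and between two consecutive sinks the orientations are forced to point toward the sink. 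Thus the orientation of a given edge $(v_i, v_{i+1})$ is determined by the location of the nearest sink on either side, and placing sinks is a global task: an LCA that probes only $o(\sqrt n / r)$ vertices near $(v_i,v_{i+1})$ has essentially no information about where the sinks lie, so two copies of the algorithm answering about the same edge, seeing disjoint probe sets embedded in a long path, can be forced to disagree.

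To make this a clean $\Omega(\sqrt n / r)$ bound I would argue as follows. Partition the path into $\Theta(\sqrt n)$ blocks each of length $\Theta(\sqrt n)$. A valid $r$-orientation must place at least one sink in every window of length $O(r)$, hence at least $\Omega(\sqrt n / r)$ sinks inside each block; by a counting/pigeonhole argument over the random placement of a distinguished ``hard'' region, with constant probability the algorithm's $q$ probes for a given edge query miss all sinks within distance $\Theta(qr)$ of that edge if $q = o(\sqrt n / r)$. When that happens, the transcript of probes is consistent with two different global sink-placements that force opposite orientations of the queried edge, so the algorithm errs on at least one of them. Formalizing this requires building an explicit pair of inputs (or a single distribution) on which the probe-answer transcript is identical but the correct answers differ — this adversary/indistinguishability argument is the main obstacle, since one must ensure the two instances genuinely differ only outside the probe set while each still admits a valid $r$-orientation with the required (opposite) orientation on the queried edge.

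Finally I would package the dependence on $r$: the factor $1/r$ enters precisely because each probe can ``rule out'' a segment of length $O(r)$ of the path as sink-free (once you see a non-sink, neighbors are only mildly constrained, but seeing the orientation near a vertex pins down at most $O(r)$ of the ambiguity), so $q$ probes resolve only an $O(qr)$-length neighborhood, and we need $qr = \Omega(\sqrt n)$ to cover a block. I expect the execution to require care in the query-consistency model — the LCA must answer consistently across all edges with a fixed random string, so the adversary argument should be phrased as: fix the random string, then find the bad input — but the combinatorial core (sinks are $\Omega(1/r)$-dense and their placement is a $\sqrt n$-scale global object) is robust and should yield the stated bound.
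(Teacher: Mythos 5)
There is a genuine gap, and it is at the combinatorial core of your argument. Your claim that in any $r$-orientation of a path every window of $O(r)$ consecutive edges must contain a sink is false: orienting every edge of the path in the same direction (say from $v_i$ to $v_{i+1}$ for all $i$) gives every vertex out-degree at most $1$, with only a single sink at the far end. So there is no ``$\Omega(1/r)$-dense sinks'' structure to exploit, and no global consistency is forced. Worse, a path (or any bounded-degree instance) cannot possibly carry this lower bound: a graph of maximum degree $\Delta$ is $\Delta$-oriented by \emph{any} orientation, so for $r \ge 2$ the path problem is trivial, and even for $r = 1$ it is a locally checkable symmetry-breaking task solvable with polylogarithmic-type local effort; the paper itself proves sublinear upper bounds for bounded-degree forests (\cref{thm:ub-bdd}), which rules out an $\Omega(\sqrt n / r)$ lower bound on such instances. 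Your reduction to deterministic algorithms via Yao and the general ``indistinguishability of transcripts'' framing are fine, but the hard distribution must be of a different nature.

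The actual source of hardness is a \emph{hidden high-degree vertex}, not long-range consistency along a sparse structure. The paper's construction takes a set $A$ of about $\sqrt n$ vertices, each already the center of a black star of size about $\sqrt n$ (so all degrees are large and, after padding with extra ``blue'' matching edges into a reservoir $B$ and randomly permuting adjacency lists, all degrees are \emph{equal}), and then hides a red star of $s = 24r$ edges centered at a uniformly random $a_0 \in A$ with leaves in $A$. When the LCA is queried on a red edge $(u,v)$, each deterministic probe hits a colored (red or blue) edge with probability at most $O(1/t)$ where $t \approx \sqrt n / r$, so with $o(\sqrt n/r)$ probes the algorithm, with constant probability, sees only black edges and must answer independently of which endpoint is $a_0$; hence each red edge is oriented away from $a_0$ with probability at least $1/6$, giving $a_0$ expected out-degree $\ge 4r$, contradicting success probability $0.9$ (since the out-degree is bounded by $24r$). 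If you want to salvage your write-up, you would need to replace the path by a construction of this type — with degrees $\Theta(\sqrt n)$, equalized so degree probes are useless, and with the $\Theta(r)$ ``bad'' edges hidden among $\Theta(r\sqrt n)$ indistinguishable ones — rather than any bounded-degree forest.
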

In fact, this lower bound holds against an LCA model with relatively strong queries: we allow algorithms to make adjacency-list, adjacency-matrix, and degree probes.

We then show upper bounds for the problem of $r$-orientation in arboricity-$\alpha$ graphs of unbounded degree, with different algorithms for different regimes of $r$. These upper bounds are also polynomial in $n$, though there is some separation between them and the lower bound of $\Omega(n^{1/2}/r)$.

\begin{theorem}[Rephrasing of \cref{prop:ub-1,prop:ub-2,prop:ub-3}]
\label{thm:upper-bound-arb}
    Suppose there is a parameter $r$, and an input graph $G$ of arboricity $\alpha$. Then, if $r \ge 10(\alpha^2 n)^{1/3}$, there is a randomized LCA that can $r$-orient $G$ with at most $\Ot(\max\{\alpha n/r^2, 1\})$ probes per query. Moreover, if $\alpha=1$ (i.e., $G$ is a forest), then there is also an LCA which can do so for any $r$, with at most $\Ot(n/r)$ probes per query.
\end{theorem}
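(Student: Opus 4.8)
The plan is to split into the two regimes promised in the statement and build a separate LCA for each. For the general arboricity-$\alpha$ case with $r \ge 10(\alpha^2 n)^{1/3}$, I would start from the classical observation that a graph of arboricity $\alpha$ has at most $\alpha n$ edges, so the number of ``high-degree'' vertices — those with degree exceeding some threshold $t$ — is at most $2\alpha n / t$. The idea is to pick $t \approx r/2$ (or a constant fraction of $r$), declare every vertex of degree $> t$ to be \emph{heavy}, and orient all edges incident to a heavy vertex \emph{away} from it toward the other endpoint; since there are at most $2\alpha n/t = O(\alpha n/r)$ heavy vertices and each contributes out-degree at most their degree, we must be careful here — instead the right move is to orient edges between two heavy vertices arbitrarily (say by vertex ID) and orient a heavy–light edge toward the light endpoint. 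Then a light vertex $v$ has out-degree at most $\deg(v) \le t \le r/2$ plus the at most (number of heavy neighbors) it must absorb; bounding the latter is where the condition $r \ge 10(\alpha^2 n)^{1/3}$ enters, and the resulting recursion (peel heavy vertices, recurse on the residual light graph which still has arboricity $\le \alpha$) should terminate in $O(1)$ levels because each level the degree threshold and the vertex count interact to kill the process. To answer a query on edge $(u,v)$ in an LCA, we only need to determine, for each of $u$ and $v$, whether it is heavy at each level and who its heavy neighbors are; a degree probe settles heaviness, and listing heavy neighbors costs probing the adjacency list, but crucially we never need to enumerate the full neighborhood of a light vertex — we only need to count heavy neighbors, and there are at most $O(\alpha n / r)$ heavy vertices \emph{total}, which bounds the relevant work. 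Optimizing the threshold and the number of recursion levels gives the claimed $\Ot(\max\{\alpha n / r^2, 1\})$ per query.

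For the forest case $\alpha = 1$ with arbitrary $r$, the structure is cleaner: the target is $\Ot(n/r)$. Here I would again mark vertices of degree $> r$ as heavy; in a forest with $n$ vertices there are at most $2n/r$ such vertices. Root each tree of the forest \emph{at} the heavy vertices in the following sense: contract nothing, but observe that removing all heavy vertices leaves a forest in which every tree has all vertices of degree $\le r$, and moreover each such residual tree is adjacent to only a bounded-in-expectation number of heavy vertices. For a light vertex, orient each incident edge according to a BFS/DFS order rooted somewhere in its residual component, giving out-degree $\le 1$ within the component, and orient edges to heavy neighbors toward the heavy side — wait, that overloads heavy vertices, so instead orient heavy–light edges toward the light endpoint, which adds at most ``number of heavy neighbors of $v$'' to $v$'s out-degree; since a light vertex in a tree has degree $\le r$ this is automatically $\le r$. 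So actually the forest argument reduces to: pick a root for each component of the whole forest, orient every edge toward its child (away from root) except we must cap out-degree at $r$, so high-degree vertices get split. To find, via an LCA with $\Ot(n/r)$ probes, which direction edge $(u,v)$ points, we need the orientation rule to be \emph{locally decidable}: e.g., orient $(u,v)$ from $u$ to $v$ iff $v$ is the ``$i$-th child'' of $u$ for $i \le r$, else from $v$ to $u$, where the child order is by ID and the parent is determined by walking toward a canonical root. The cost of walking to a root is the depth, which can be $\Theta(n)$, so this naive scheme fails — the real algorithm must instead break long paths into segments of length $\approx r$ using the heavy vertices and some hashing of light vertices into ``virtual roots'', so that from any edge we only walk $\Ot(n/r)$ steps before hitting a marked vertex whose local neighborhood determines the orientation. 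I expect the forest algorithm to be essentially this path-chopping argument.

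The main obstacle, in both regimes, is the same: ensuring that the orientation rule is simultaneously \emph{valid} (every out-degree $\le r$) and \emph{locally computable} with the stated probe budget, i.e., that the ``decision region'' for any single edge never forces us to explore more than $\Ot(\alpha n / r^2)$ (resp. $\Ot(n/r)$) vertices. Validity wants us to be generous with global structure (a consistent rooting, a consistent peeling order); local computability wants everything to be decidable from a small ball. The tension is resolved by the arboricity bound, which limits the \emph{number} of problematic (heavy) vertices to $O(\alpha n / r)$, combined with a hashing/ID-based tie-breaking scheme that makes the residual light components small enough to traverse. I would also need to verify the randomized analysis: the $0.9$-type success and the $\Ot$ hiding log factors presumably come from a random partition or random hash whose bad events are controlled by a union bound over the $O(\alpha n / r)$ heavy vertices, and checking that these bad events are genuinely rare under the probe budget is the calculation I would defer to the full proof. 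The sanity check $r \ge 10(\alpha^2 n)^{1/3} \Rightarrow \alpha n / r^2 \le \alpha^{1/3} n^{1/3}/100$, and in the forest case $n/r$ vs.\ the lower bound $\sqrt n / r$, confirms there is the expected polynomial gap, so I would not expect these bounds to be tight and would not spend effort trying to shave them here.
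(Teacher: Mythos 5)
There is a genuine gap here, in both regimes. First, your orientation rules are stated backwards in a way that breaks validity: you propose to orient a heavy--light edge ``toward the light endpoint,'' i.e.\ away from the heavy vertex, which gives a heavy vertex out-degree up to its full degree (unbounded). The paper does the opposite: edges touching very-high-degree (``large'') vertices are oriented \emph{toward} them, which costs the large vertex nothing (in-edges are free) and costs the other endpoint at most one out-edge per large vertex, of which there are only $O(r)$ by the arboricity bound. Second, and more importantly, your sketch misses the mechanism that actually achieves $\Ot(\alpha n/r^2)$ probes. The paper uses \emph{three} tiers (small $\le r$, large $\ge \alpha n/s$ with $s=r/10$, medium in between) and, for a medium--medium edge, randomly \emph{samples} $\Ot(d/s)$ neighbors of one endpoint to estimate its number of medium neighbors; the validity argument is that the subgraph induced by medium vertices still has arboricity $\alpha$, so at most $4\alpha^2 n/(rs)\le 3s$ medium vertices have $\ge s$ medium neighbors --- this is exactly where $r\ge 10(\alpha^2 n)^{1/3}$ enters. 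Your plan to ``count heavy neighbors'' exactly cannot be done within the budget (it needs $\Omega(\deg)$ adjacency-list probes or $\Omega(\alpha n/r)$ matrix probes, losing the crucial factor of $r$), and the multi-level peeling recursion you propose reintroduces the Parnas--Ron problem: whether a vertex is heavy at level two depends on the residual degrees of its neighbors, recursively, which is precisely the blowup the paper avoids by having a single-shot, sampling-based rule.

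In the forest case your sketch also does not reach a proof. The paper's algorithm is not a rooting/path-chopping scheme: it randomly colors the \emph{edges} with $r/5$ colors, orients every non-large edge toward the minimum-ID vertex of its monochromatic component (so each vertex gets at most one out-edge per color), orients edges at large vertices (degree $\ge 5n/r$, of which there are at most $2r/5$) toward them, and bounds the probe cost by a Chernoff bound on $\sum_v X_v\deg v$, where $X_v$ indicates that $v$'s parent edge has the queried color --- giving $\Ot(n/r)$ exploration with high probability. Your ``virtual roots / segments of length $\approx r$'' idea is left as a guess (as you acknowledge): you neither specify a rule whose out-degrees are provably at most $r$ after capping/splitting, nor prove that both endpoints compute consistent answers, nor establish the $\Ot(n/r)$ walk bound. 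The high-level intuition (few high-degree vertices, randomness to shatter the tree) is in the right spirit, but the concrete constructions and the two probabilistic arguments (neighbor sampling with Chernoff in the medium regime; component-size concentration in the forest regime) are the substance of the proof and are missing.
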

For example, in the case of a forest ($\alpha=1$) and when $r=10n^{1/3}$, this result gives an LCA for $r$-orientation with probe complexity $\Ot(n^{1/3})$. In contrast, the lower bound given by \cref{thm:lower-bound-arb} is $\Omega(n^{1/6})$.

Finally, we consider the bounded-degree case, where $G$ has maximum degree $\D$, in the specific case where $\alpha=1$ (that is, $G$ is a forest). In this case, any orientation achieves $r \le \D$, so the meaningful case to consider is where $r < \D$. We show a sublinear algorithm in the case where $r = \D^{1-\Omega(1)}$:

\begin{theorem}[Rephrasing of \cref{thm:ub-bdd}] \label{thm:ub-bdd-reph}
    Let $r$ be a parameter and $G$ be an input forest with maximum degree $\D$. Then, there is a randomized LCA that $r$-orients $G$, with at most $\D n^{1-\log_\D r + o(1)}$ probes per query.
\end{theorem}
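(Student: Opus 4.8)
The plan is to recast an $r$-orientation as a locally computable \emph{level function} and then use a random edge-colouring to force each vertex's level to depend only on a region of size $n^{1-\log_\Delta r+o(1)}$, which the LCA can afford to explore.

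\textbf{Step 1: reformulation via peeling.} I would run a generalised peeling process on $G$: in each round delete every vertex whose current degree is at most $r$ (or some threshold tied to $r$), and let $\ell(v)$ be the round in which $v$ is removed. Orient each edge from its lower-level endpoint to its higher-level endpoint, breaking ties by vertex ID. Since a forest on $m$ remaining vertices has fewer than $m$ edges, a constant fraction of the remaining vertices have degree at most $r$ each round (for $r\ge 2$), so the process halts; and by the standard peeling argument, when $v$ is removed it has at most $r$ neighbours of level $\ge \ell(v)$, so the induced orientation is an $r$-orientation. The LCA therefore only needs to answer: given a query edge $(u,v)$, what are $\ell(u)$ and $\ell(v)$? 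Computed naively, $\ell(v)$ depends on the survival status of every vertex within distance $\ell(v)$ of $v$ -- a ball of up to $\Delta^{\ell(v)}$ vertices, which is $\Omega(n)$ for long paths or deep trees, matching the Parnas--Ron bottleneck.

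\textbf{Step 2: the shattering lemma (the crux).} The key structural claim -- the ``shattering-like'' result -- is that after modifying the process with a random edge-colouring, the set of vertices still alive after $t=O(\log_\Delta n)$ rounds fragments into connected components of size $n^{1-\log_\Delta r+o(1)}$ with high probability. Concretely I would independently colour the edges with a number of colours chosen as a function of $\Delta$ and $r$, run the first rounds of peeling using only a single colour class -- a forest whose effective branching ratio is $(\Delta-1)$ times the probability that an incident edge carries that colour, hence tunably small -- and argue that the still-alive ``core'' of this sparser forest is governed by a (near-)subcritical Galton--Watson process, so a fixed vertex lies in a core-component of size $n^{1-\log_\Delta r+o(1)}$ except with probability $o(1/n)$; a union bound over the at most $n$ potential query endpoints makes this simultaneous. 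The colour count must simultaneously keep the contribution of each colour phase to the out-degree under control so the total is at most $r$.

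\textbf{Step 3: assembling the LCA.} Given the shattering lemma, on query $(u,v)$ the algorithm explores the $O(\log_\Delta n)$-radius dependency balls needed to run the first rounds, locates the small core-component containing each of $u$ and $v$, brute-force simulates the remaining peeling inside that component (a forest on $n^{1-\log_\Delta r+o(1)}$ vertices, explorable with $\Delta$ neighbour-probes per vertex), reads off $\ell(u),\ell(v)$, and outputs the orientation of $(u,v)$ determined by the levels with ID tie-breaking. Summing the two contributions gives $\Delta\,n^{1-\log_\Delta r+o(1)}$ probes; consistency across copies of the LCA is automatic since all randomness sits in the shared colouring and ID assignment and the peeling is deterministic given them, and correctness (out-degree $\le r$) is Step 1 together with the out-degree accounting of Step 2.

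\textbf{Main obstacle.} The heart of the proof is the shattering lemma: choosing the colouring granularity and the round schedule so that (i) the per-round dependency ball stays within the $n^{1-\log_\Delta r+o(1)}$ budget, (ii) the surviving core's branching ratio is subcritical enough to force components of exactly that size, and (iii) the out-degree remains capped at $r$ across all colour phases -- this three-way balance is what pins the exponent at $1-\log_\Delta r$. I also expect some extra care for small $r$ (say $r=O(\log\Delta)$), where the peeling threshold and colour count degenerate and one likely falls back on a cruder argument, with the $n^{o(1)}$ slack absorbing the loss.
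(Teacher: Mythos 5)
Your high-level intuition --- that a shared random edge-colouring into about $r$ classes shatters a bounded-degree forest into monochromatic components of size $n^{1-\log_\Delta r+o(1)}$, which the LCA can afford to explore --- is exactly the phenomenon the paper exploits. But the peeling/level superstructure you build around it is both unnecessary and the source of your unresolved difficulties. The paper needs no levels at all: each colour class is itself a forest, and within each monochromatic component (a tree) one simply orients every edge toward the minimum-ID vertex of that component, so each vertex gets out-degree at most $1$ per colour and at most $r$ in total; the query algorithm is a single DFS inside the queried edge's monochromatic component, costing $\Delta$ times the component size. By contrast, your $\ell(v)$ requires simulating $t=O(\log_\Delta n)$ peeling rounds, whose dependency ball has up to $\Delta^{t}=n^{\Theta(1)}$ vertices, and your fix of ``running the first rounds of peeling using only a single colour class'' leaves it unclear why the resulting orientation of the \emph{whole} graph has out-degree at most $r$ --- the cross-phase out-degree accounting is precisely what you defer to your ``main obstacle'' paragraph rather than prove.

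The more serious gap is in your proposed proof of the shattering lemma itself. To keep the out-degree budget you must use roughly $r$ colours, so an edge lands in a given class with probability about $1/r$ and the within-class branching ratio is about $\Delta/r$, which is \emph{supercritical} whenever $r<\Delta$ --- the only nontrivial regime, since $r\ge\Delta$ is trivial. A ``(near-)subcritical Galton--Watson'' argument therefore cannot apply: genuine subcriticality would force the retention probability down to about $1/\Delta$, i.e.\ about $\Delta$ colours and only a $\Delta$-orientation, and would predict polylogarithmic components rather than the $n^{1-\log_\Delta r}$ you need. The real reason components stay of size $n^{1-\log_\Delta r+o(1)}$ is not subcriticality but the finiteness of the host forest: an $n$-vertex forest of maximum degree $\Delta$ only allows a ratio-$(\Delta/r)$ growth to compound through roughly $\log_\Delta n$ generations, giving about $(\Delta/r)^{\log_\Delta n}=n^{1-\log_\Delta r}$ vertices. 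Turning this heuristic into a bound with failure probability small enough to union-bound over all $n$ possible roots (the paper needs probability $e^{-n^{\Omega(1)}}$ per root) is the technical heart of the result, and the paper proves it by an induction over subtrees bounding the moment generating function $\Ex[e^{tX}]$ of the component size $X$ with a carefully chosen $t$ and a concave potential function. Your Step 2, as written, asserts the conclusion of that concentration bound without a mechanism that works in the supercritical regime, so the proposal as it stands does not constitute a proof.
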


To prove this theorem, we use a concentration bound on the size of connected components of a random subgraph of a bounded-degree forest.
We also use this to derive an LCA for 4-coloring any tree (or forest) with bounded degree in sublinear time.
Specifically, we show the following:

\begin{theorem}
    Let $G$ be an input forest with maximum degree $\D$, where $\D$ is a constant. Then, there is a randomized LCA that 4-colors $G$ with query complexity $O(n^{1-\beta})$, where $\beta > 0$ is a constant depending on $\D$.
\end{theorem}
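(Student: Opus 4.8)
The plan is to reduce $4$-coloring of a bounded-degree forest to the $r$-orientation machinery developed for \cref{thm:ub-bdd-reph}, choosing $r$ to be a suitable constant. First I would invoke the LCA from \cref{thm:ub-bdd-reph} with, say, $r = \sqrt{\D}$ (or any fixed constant strictly between $1$ and $\D$, chosen so that $\log_\D r$ is a constant bounded away from $0$ and $1$); since $\D$ is a constant, this LCA answers orientation queries in $O(n^{1-\log_\D r + o(1)}) = O(n^{1-\beta_0})$ probes for some constant $\beta_0 > 0$. The output is an orientation of $G$ in which every vertex has out-degree at most $r = O(1)$, and crucially this orientation is consistent across all queries (it depends only on the shared random string). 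So the underlying digraph $\vec{G}$ has maximum out-degree a constant; equivalently, after forgetting orientations, the edges of $G$ can be partitioned into at most $r$ forests (one per ``out-edge slot''), and more simply, $G$ is now equipped with a proper acyclic orientation of constant out-degree.

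Next I would run the standard Cole--Vishkin-style $3$-coloring argument on this bounded-out-degree digraph, but compiled into an LCA. The key point is that color-reduction in the LOCAL model on a graph of maximum out-degree $d$ (here $d = r = O(1)$) converges in $O(\log^* n)$ rounds to a proper $O(d^2)$-coloring, and then a constant number of additional rounds drops this to $3$ colors for a forest; see the discussion of \cite{cole1986deterministic} in the introduction. By the Parnas--Ron paradigm \cite{parnas2007approximating}, simulating an $O(\log^* n)$-round LOCAL algorithm on a bounded-degree graph requires exploring a ball of radius $O(\log^* n)$, which touches $\D^{O(\log^* n)} = n^{o(1)}$ vertices, and each such vertex requires us to learn its orientation via the \cref{thm:ub-bdd-reph} LCA at cost $O(n^{1-\beta_0})$. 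Multiplying, the total probe complexity per color query is $n^{o(1)} \cdot O(n^{1-\beta_0}) = O(n^{1-\beta})$ for any constant $\beta < \beta_0$. To answer a $4$-coloring query on a vertex $v$ (the theorem asks for a $4$-coloring; $3$ suffices for a forest, but $4$ gives slack), we simply return the color this simulated algorithm assigns to $v$; consistency across queries follows because both the orientation LCA and the LOCAL simulation are deterministic functions of the shared randomness and the explored neighborhood.

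I expect the main technical obstacle to be bookkeeping the interaction between the two layers: the Cole--Vishkin simulation needs, for each explored vertex, not just its adjacency list but the \emph{orientation} of its incident edges, and that orientation is itself the output of a randomized LCA. One must check that the randomness is shared coherently — i.e., the orientation LCA's answers are mutually consistent when queried from inside the $3$-coloring simulation at many different vertices — which holds because an LCA by definition produces a globally consistent output, so no extra care beyond citing that property is needed. A second, minor subtlety is that Cole--Vishkin as usually stated assumes a rooted tree with edges oriented toward the parent, whereas our orientation only guarantees constant \emph{out}-degree, not out-degree exactly $1$; but the color-reduction argument works verbatim for any constant out-degree $d$, producing initially a $\poly(d)$-coloring, so this is not a real obstruction. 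Finally one records that $\beta > 0$ depends only on $\D$ (through $\beta_0$, hence through the exponent in \cref{thm:ub-bdd-reph} and the choice $r=\sqrt\D$), completing the proof.
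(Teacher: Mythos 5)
There is a genuine gap, and it sits exactly where you wave it away as ``not a real obstruction.'' Your plan needs a LOCAL algorithm that, given only a consistent orientation of the forest with \emph{constant out-degree $d\ge 2$}, produces a $3$- or $4$-coloring in $O(\log^* n)$ rounds. Cole--Vishkin does not provide this: its final reduction to $3$ colors (the shift-down/recolor step) relies on every vertex having out-degree \emph{exactly} $1$, i.e.\ a rooted forest. With out-degree $d\ge 2$ the Linial-style generalization you cite only reaches $\poly(d)$ colors in $O(\log^* n)$ rounds, and the subsequent ``constant number of additional rounds drops this to $3$ colors'' is false: the standard one-color-class-at-a-time reduction needs more colors than the maximum \emph{degree}, so it bottoms out at $\D+1$ colors, not $4$; going below $\D+1$ to $O(1)$ colors on a tree using an out-degree-$d$ orientation is an $\Omega(\log n)$-type task (this is the Barenboim--Elkin regime), not a constant-round one. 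You also cannot escape by taking $r=1$ in \cref{thm:ub-bdd-reph} to get a genuinely rooted forest, because then $\log_\D r=0$ and the orientation LCA's probe bound $\D n^{1-\log_\D r+o(1)}$ is no longer sublinear; with the choice $r=\sqrt\D$ you are forced into the out-degree-$\ge 2$ situation where the $O(\log^* n)$-round $4$-coloring step is unsupported. And if you instead simulated an $\Omega(\log n)$-round LOCAL coloring via Parnas--Ron, the ball size alone is $\D^{\Omega(\log n)}=n^{\Omega(\log\D)}$ probes, destroying sublinearity. (A decomposition of the out-edges into $r$ rooted forests and running Cole--Vishkin on each only yields $3^{r}$ colors, again far from $4$.)

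For comparison, the paper's proof (\cref{thm:color} with $r=2$, via \cref{alg:color}) avoids any LOCAL color reduction entirely: it randomly labels each edge with one of two labels, uses the shattering bound \cref{prop:tree-chernoff} to argue that every monochromatic component has size $n^{1-\log_\D 2+\eps}$ with very high probability, $2$-colors each such component exactly by a full DFS (affordable precisely because the component is small), and outputs the product of the two $2$-colorings, giving $2^2=4$ colors with probe complexity $O(n^{1-\beta})$ for $\beta=\beta(\D)>0$. The point your proposal misses is that reaching as few as $2$ colors per class is inherently a global (diameter-scale) task, which is why the paper pays for full exploration of shattered components rather than trying to color locally.
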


\subsubsection{Future directions:}
It remains open whether similar upper bounds for $r$-orientation can be obtained for other families of graphs. We are optimistic that ideas akin to those we develop for \cref{thm:ub-bdd-reph} might yield new upper bounds for minor-free graphs.
This is because minor-free graphs do not expand, which intuitively is required for the coloring approach we design to prove \cref{thm:ub-bdd-reph}.
We also believe that useful tools in tackling this question include LCAs for minor-free partitioning oracles~\cite{hassidim2009local,levi2015quasi,kumar2021random}.
It would be quite interesting to obtain upper bounds for bounded arboricity graphs that go beyond our result \cref{thm:upper-bound-arb} or to bring the lower bound of \cref{thm:lower-bound-arb} up to match the result of \cref{thm:upper-bound-arb} in the case of a forest. 
We expect that techniques we develop to prove \cref{thm:ub-bdd-reph} do not transfer to bounded arboricity graphs, as there are even expanders with arboricity $3$.

\subsection{Related work} \label{sec:related}
\subparagraph{Orientation and Parnas-Ron paradigm:} 

The Parnas-Ron paradigm~\cite{parnas2007approximating} is an important tool for converting distributed algorithms for use in sublinear-time models.
In particular, if there is a $T$-round distributed local algorithm $\cA$, then applying the Parnas-Ron paradigm is equivalent to collecting the $T$-hop neighborhood of a vertex $v$ and using that neighborhood to obtain the output of $\cA$ for $v$, thus getting an LCA with query complexity $O(\D^T)$.

There is a classical peeling algorithm that, given a graph of arboricity at most $\alpha$, finds a $((2 + \eps) \alpha)$-orientation in $O(\log n)$ parallel or LOCAL rounds, for a constant $\eps > 0$. In each step, this algorithm considers all the vertices with degree at most $(2+\eps) \alpha$. All the edges incident to these vertices are oriented outward, and those vertices are then removed from the graph. It is easy to show that this algorithm takes $O(\log n)$ such steps. By following the Parnas-Ron paradigm, in graphs with maximum degree $\Delta$, this peeling algorithm can be simulated with $\Delta^{O(\log n)}$ LCA probes, which yields a trivial upper-bound for any $\Delta > 1$ --- this upper-bound amounts to gathering the entire graph.
If one aims for a more relaxed approximation, e.g., a $\sqrt{\Delta}$-orientation, then the aforementioned peeling algorithm takes $\log_{\sqrt{\Delta}}{n}$ many steps. Nevertheless, even in this case the Parnas-Ron approach yields $\Delta^{\log_{\sqrt{\Delta}}{n}} = n^2$ LCA probe complexity.
Moreover, suppose one aims for any LCA probe complexity that is sublinear in $n$. In that case, the peeling algorithm provides only the guarantee of $\Delta$-approximate orientation, which is trivial to obtain by simply orienting every edge arbitrarily.

\subparagraph{Orientation and LLL:} Several works study the LCA and the closely related VOLUME complexity of the distributed Lovász local lemma (LLL)~\cite{brandt2016lower,fischer2017sublogarithmic,dorobisz2021local,brandt2021randomized}.
Despite significant progress in understanding the randomized LCA complexity of sinkless orientation and LLL, those techniques do not seem to apply to the problem we study.
In particular, the LLL applies in cases where a random orientation will succeed at most vertices (as is the case with sinkless orientation). However, for the problem of $r$-orientation, a random orientation is very unlikely to be close to correct.

\subparagraph{LCA sparsification:}
One recent development in the LCA model is a graph sparsification technique, which considers only a carefully selected subset of the neighbors of a vertex $v$ to update $v$'s state. 
This approach has led to new advances for problems such as approximate matching~\cite{ghaffari2019sparsifying,kapralov2020space}, maximal independent set~\cite{ghaffari2019sparsifying,ghaffari2022local}, set cover~\cite{grunau2020improved}, and graph coloring~\cite{chang2019complexity}. However, it is unclear how to apply this technique in the context of $r$-orientation.

\subparagraph{LCA shattering:}
Our approach for bounded-degree forests is reminiscent of the graph-shattering technique, which has been very influential in designing efficient LCA algorithms for a number of problems, including approximate matching~\cite{levi2015local,barenboim2016locality,ghaffari2019sparsifying}, maximal independent set~\cite{rubinfeld2011fast,alon2012space,ghaffari2016improved,barenboim2016locality,ghaffari2019sparsifying}, graph coloring~\cite{barenboim2016locality,chang2019complexity} and LLL algorithms~\cite{alon1991parallel,molloy1998further,fischer2017sublogarithmic,chang2019distributed}. The ideas in this line of work can be traced back to Beck's analysis of the algorithmic LLL~\cite{beck1991algorithmic}. The general idea in these results is to first execute an algorithm that finds only a partial solution, e.g., finds an independent set that is not necessarily maximal. For properly designed algorithms, it can be shown that such an approach ``shatters'' the graph into relatively small connected components of interest, e.g., after removing the found independent set and its neighbors from the graph, each connected component has only $\poly(\Delta)$ vertices. These $\poly(\Delta)$ vertices in a single connected component can then be processed with $\poly( \Delta)$ LCA probes with a simple graph traversal.

We remark that our approach departs from this general scheme in the sense that it does not design an algorithm that finds a partial solution first. 
Instead, it immediately shatters the input tree into more manageable components.

\subparagraph{Other related work:}
Some prior work has also studied the complexity of LCA algorithms in unbounded-degree graphs. For instance, \cite{even2014deterministic} develop an LCA algorithm for $O(\Delta^2 \log \Delta)$-coloring a graph with maximum degree $\Delta = o(\sqrt{n})$. Another work \cite{levi2015local} designs two algorithms for graphs that do not have constant degrees: the first approach is for maximal independent set, which has a quasi-polynomial probe complexity in the maximum degree $\Delta$, while the second approach is for approximating maximum matchings and uses $\poly(\Delta, \log n)$ probes per query. 
Yet another line of work \cite{parter2019spanner, arviv2023spanner} develops an LCA algorithm to construct an $O(k^2)$-spanner with $\Ot(n^{1+1/k})$ edges with probe complexity $n^{2/3-\Omega_k(1)} \poly(\Delta)$.

Several works study the complexity of LLL and related graph problems in LCA, VOLUME, LOCAL, and locally checkable labeling (LCL) models~\cite{goos2015non,brandt2016lower,hefetz2016polynomial,rosenbaum2020seeing,brandt2021randomized,grunau2022landscape,balliu2022distributed}.
We note that our lower-bound applies to the LCA setting, which, as we discuss in \cref{sec:preliminaries}, also applies to the VOLUME model.
The authors of \cite{brandt2021randomized} show that \textit{deterministically} coloring a tree with maximum degree $\poly(c)$ with $c \ge 2$ colors, for any fixed constant $c$, requires $\Theta(n)$ VOLUME probes. Another line of work on the hierarchies of VOLUME complexities~\cite{rosenbaum2020seeing} shows that there are graph problems whose randomized VOLUME complexity is $\Theta\rb{n^{c}}$, for any constant $c \in (0, 1)$. The celebrated result~\cite{parnas2007approximating} proves that to approximate a vertex cover within a constant factor multiplicative and $\eps n$ additive error it is necessary to use at least the order of the average degree many probes.

\section{Preliminaries}
\label{sec:preliminaries}

\subsection{LCA model} \label{sec:lca}
Throughout this paper, we use the LCA model of computation, which was originally introduced in \cite{rubinfeld2011fast,alon2012space}. The goal of this model, in short, is to orient any input edge $e$ quickly --- in particular, without necessarily computing the orientations of the rest of the graph.

Formally, in this model, the algorithm $\mc A$ receives, as a query, an edge $e$ of the graph $G$. It also has probe access to an adjacency matrix of $G$, an adjacency list of $G$, and the degrees of the vertices of $G$. Specifically, for any $i, u, v$, $\mc A$ can probe the following: whether $\{u, v\}$ is an edge of the graph, the $i$-th neighbor of $v$, or the degree of $v$. $\mc A$ is assumed to know the number of vertices and their IDs beforehand. $\mc A$ also has access to a source of shared randomness, which does not use up probes to access. Then, after making some probes to the input, $\mc A$ must return an orientation of $e$. Suppose that for each edge of $G$, one separate copy of $\mc A$ was run, with the same source of shared randomness across all copies (but no ability to communicate otherwise). 
Then, we wish that the resulting orientation should satisfy the desired property, e.g., low out-degree for every vertex, with a given (high) probability. We will be concerned with the \textit{probe complexity} of such an algorithm, i.e., the maximum number of probes it can possibly use on any edge. In this paper, we will not be concerned with the time or space complexity of our algorithms. 
In particular, we will not limit how long the shared random string may be.

Though we have defined the model to have access to adjacency matrix probes, all of the algorithms we describe do not use this kind of probe, instead only using adjacency list and degree probes. Our lower bounds, on the other hand, are still robust even against algorithms that use adjacency matrix probes.

The VOLUME model, introduced in \cite{rosenbaum2020seeing}, is another computation model which is similar to, but slightly weaker than, the LCA model. Since it is strictly weaker, this means that our lower bounds also carry over to the VOLUME model.

There has been significant interest in designing LCA algorithms for fundamental problems in computer science. Some examples include
locally (list)-decodable codes~\cite{BF90,L89,GLR+91,FF93,KT00,Yek10,
GRS00,AS03,STV01,GKZ08,IW97,KS09,BET10},
local decompression~\cite{MuthuSZ,SG,FV07,GN},
local reconstruction and filters for monotone and
Lipshitz functions~\cite{ACC+08,SS10,BhattacharyyaGJJRW12,JhaR13,
AwasthiJMR12,Lange2022ProperlyLM},  and
local reconstruction of graph properties
\cite{KalePS13}.
The study of LCAs has been very active in the
past few years, with recent results that include constructing
maximal independent sets 
\cite{rubinfeld2011fast,alon2012space,barenboim2016locality,LeviRY17,even2014deterministic,ghaffari2016improved,ghaffari2019sparsifying,ghaffari2022local}, coloring \cite{chang2019complexity}, approximate maximum matchings \cite{NO08, YYI09,
MRVX12,MansourV13,LeviRY17, ghaffari2019sparsifying, KapralovMNT19}, satisfying assignments for $k$-CNF \cite{rubinfeld2011fast,alon2012space}, local computation mechanism design \cite{HassidimMV16},
local decompression \cite{DLRR13}, local reconstruction of graph properties \cite{CampagnaGR13}, minor-free graph partitioning~\cite{hassidim2009local,levi2015quasi,kumar2021random}, and local generation of large random objects \cite{EvenLMR17,biswas2020local,biswas2022local}.

\subsection{Tools and notation} 

We state Yao's minimax principle, applied specifically to LCAs in the form that we will use.
\begin{theorem}[Yao's minimax principle] \label{thm:yao}
Suppose there exists a randomized LCA $\Aa$ with worst-case probe complexity $f(n)$ which solves a problem with probability $p$. Then, for any distribution $\mc D$ of inputs, there is a deterministic LCA $\Aa$ with probe complexity $f(n)$ which with probability $p$ solves the problem on an input drawn from $\mc D$.
\end{theorem}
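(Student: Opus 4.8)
The plan is to prove \cref{thm:yao} by the standard averaging argument that underlies the ``easy direction'' of Yao's principle, taking care to phrase everything in the LCA model where the randomness is a single string shared across all copies of the algorithm. First I would unpack what a randomized LCA is: the algorithm $\Aa$ has access to a shared random string $\rho$, and once $\rho$ is fixed, $\Aa$ becomes a deterministic LCA, which I denote $\Aa_\rho$. Because every copy of $\Aa$ (one per queried edge) reads the \emph{same} $\rho$, the orientation produced by the copies of $\Aa_\rho$ on the edges of $G$ is exactly the orientation $\Aa$ produces under that random draw. Hence the hypothesis ``$\Aa$ solves the problem with probability $p$'' means precisely: for every input graph $G$ on $n$ vertices, $\Pr_\rho[\Aa_\rho \text{ produces a valid solution on } G] \ge p$. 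Likewise, worst-case probe complexity $f(n)$ means that for every $G$ and every $\rho$, each copy of $\Aa_\rho$ makes at most $f(n)$ probes, so \emph{every} deterministic instantiation $\Aa_\rho$ has probe complexity at most $f(n)$.

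Next, fix an arbitrary input distribution $\mc D$. Consider drawing $G \sim \mc D$ and $\rho$ independently. Since for each fixed $G$ we have $\Pr_\rho[\Aa_\rho \text{ solves } G] \ge p$, taking expectation over $G \sim \mc D$ gives
\[
\Pr_{G \sim \mc D,\, \rho}[\Aa_\rho \text{ solves } G] \;=\; \Ex_{G \sim \mc D}\!\big[\Pr_\rho[\Aa_\rho \text{ solves } G]\big] \;\ge\; p .
\]
I would then swap the order of the two averages (linearity of expectation) to rewrite the left-hand side as $\Ex_\rho\!\big[\Pr_{G \sim \mc D}[\Aa_\rho \text{ solves } G]\big] \ge p$. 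Since the $\rho$-average of the quantity $\Pr_{G \sim \mc D}[\Aa_\rho \text{ solves } G]$ is at least $p$, there must exist a particular fixing $\rho^\star$ with $\Pr_{G \sim \mc D}[\Aa_{\rho^\star} \text{ solves } G] \ge p$. Output the deterministic LCA $\Aa_{\rho^\star}$: by the first paragraph it has probe complexity at most $f(n)$, and it solves the problem with probability at least $p$ on an input drawn from $\mc D$, which is what was claimed.

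There is essentially no hard step: the content is only the interchange of the two averages and the pigeonhole extraction of a good $\rho^\star$. The single point that warrants care is the precise meaning of ``worst-case probe complexity $f(n)$'' for the randomized algorithm. Under the genuinely worst-case reading used throughout our applications (at most $f(n)$ probes for every input \emph{and} every $\rho$), the bound transfers to $\Aa_{\rho^\star}$ for free, as noted above. If instead one only assumed the expected probe count is at most $f(n)$ per input, then conditioning on the success event could inflate the probe count of $\Aa_{\rho^\star}$; in that case I would restrict to the event that the solution is valid \emph{and} the probe count is at most, say, $10 f(n)$, losing only a constant factor in the probe bound and a constant in the success probability via Markov's inequality. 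I would also remark that the argument is indifferent to which probe types are permitted (adjacency list, adjacency matrix, degree), so the statement applies verbatim to the strong-probe LCA model invoked in our lower bound.
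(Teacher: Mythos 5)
Your proposal is correct: the paper states \cref{thm:yao} as a standard tool without proof, and your averaging argument---fixing the shared random string $\rho$, noting each $\Aa_\rho$ is a deterministic LCA with probe complexity at most $f(n)$, swapping the two expectations, and extracting a good $\rho^\star$ by averaging---is exactly the intended justification. Your side remark about the worst-case versus expected probe-count reading is a sensible clarification and consistent with how the bound is used in the lower-bound proof of \cref{prop:lower-bound}.
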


Given a graph $G = (V, E)$, we use $\deg_G u$ to denote the degree of vertex $u \in V$. When $G$ is clear from the context, we omit the subscript $G$ and write $\deg u$ only.

When we use little-$o$ asymptotic notation, it is assumed to be with respect to the variable $n$, the number of vertices in the input graph. We will also frequently write inequalities that only hold when $n$ is sufficiently large, implicitly using the fact that our asymptotic statements are usually trivially true when $n$ is bounded. Moreover, we will often omit floor and ceiling signs for clarity.

We use the phrase ``with high probability'' to mean ``with probability at least $1-n^{-c}$,'' where $c > 1$ is a constant that is sufficiently large such that all union bounds which are used henceforth will still yield sufficiently small probabilities. This is a slight abuse of notation, since the threshold for $c$ will vary in each usage, but the meaning will be clear from context. We also use the phrase ``with very high probability'' to mean ``with probability at least $1-f(n)$,'' where $f(n)$ is smaller than $n^{-c}$ for any constant $c$, for sufficiently large $n$.


\section{Orientation in unbounded-degree graphs}

We first consider the question of $r$-orientation in general graphs of arboricity at most $\alpha$, with no further restriction on the graph. 

\subsection{Lower bound for orientation in unbounded-degree graphs}

First, we demonstrate a lower bound that shows that even in the $\alpha=1$ (i.e., forest) case, we need $\Omega(n^{1/2}/r)$ probes to $r$-orient a graph. Essentially, the proof constructs a random tree in which it is hard to find a particular star hidden inside the tree, so the edges of the star must be oriented randomly. This lower bound is particularly powerful since it works even for an LCA that can probe both the adjacency lists and the adjacency matrix (as well as the degrees).

\begin{theorem}\label{prop:lower-bound}
Suppose that there is an LCA which, on an input forest $G$ with $n$ vertices, $r$-orients edges of $G$ with probability greater than 0.9. Then, the LCA must use $\Omega(n^{1/2}/r)$ probes per query in the worst case.
\end{theorem}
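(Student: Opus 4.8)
The plan is to use Yao's minimax principle (\cref{thm:yao}) and exhibit a distribution $\mc D$ over $n$-vertex forests on which any deterministic LCA making $o(n^{1/2}/r)$ probes fails to produce a valid $r$-orientation with constant probability. The hard instance should be a random tree that ``hides'' a star $K_{1,s}$ with $s = 2r+1$ leaves: the center $c$ of this star is forced to have out-degree at least $s - (\text{in-degree}) \ge s - \deg(c) + \deg_{\text{star}}(c)$... more simply, among the $s$ star-edges incident to $c$, at least $s - r > r$ of them must be oriented away from some endpoint, and symmetrically the leaves each have degree essentially $1$, so to avoid a leaf having out-degree $>r$ is automatic, but the center must have at most $r$ out-edges among its $\ge s$ incident edges — forcing at least $s - r = r+1$ of the star edges to point \emph{into} $c$ from its leaves. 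So the orientation of the star edges is essentially determined by knowing which vertex is the center.

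First I would set up the construction: take a ``skeleton'' tree structure — e.g. a long path or balanced structure on roughly $n$ vertices — and plant a star whose center is a uniformly random vertex $c$ and whose $s = \Theta(r)$ leaves are also chosen according to the skeleton, so that from the outside the planted star looks locally identical to many decoy configurations. The key design goal is \emph{indistinguishability}: an LCA answering a query on a star edge $e=(u,v)$ must, using few probes, figure out whether $u$ or $v$ is the hidden center; if it cannot, then whatever deterministic orientation it outputs for $e$ will be wrong for a constant fraction of the placements of the star, and since $r+1$ of the star's edges must point toward the center, getting the orientation of even a noticeable fraction of these edges ``backwards'' (pointing away from $c$) pushes $c$'s out-degree above $r$. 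To make the center hard to locate with few probes, the decoys should be arranged so that there are $\approx n$ candidate ``centers'' each looking the same in any $O(\text{few})$-probe view — the classic way is to make degrees uniform (or the star's extra edges subdivided / rerouted through a gadget) so that adjacency-list, adjacency-matrix, and degree probes all return uninformative answers until the algorithm stumbles onto a structural ``marker.''

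The core counting argument is then: with $q = o(n^{1/2}/r)$ probes, a single query-copy of the LCA explores only $q$ vertices; across the $\le s = \Theta(r)$ copies handling the star's edges, the union of explored vertices has size $O(qr) = o(n^{1/2})$. If the hidden star's center and leaves are embedded via a random matching into $\Theta(n^{1/2})$ (or more) candidate slots, a birthday-type bound shows that with constant probability none of the LCA's probes lands on a ``revealing'' location, so the LCA's behavior is independent of which slot is the true center. Conditioned on that event, the $s$ deterministic outputs are fixed while the center's identity is still (near-)uniform over $\Omega(1)$-many consistent slots, so in expectation a constant fraction of the star edges get oriented away from the true center; choosing constants so that more than $r$ of them do, we violate the $r$-orientation property with probability bounded below by a constant. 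By tuning the $0.9$ success bound and the number of decoy slots ($\Theta(n^{1/2})$) against $q$, this forces $q = \Omega(n^{1/2}/r)$.

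The main obstacle I anticipate is the indistinguishability engineering: making a \emph{forest} (so, a tree — no cycles to hide structure in) in which a high-degree star center is genuinely hard to locate, while simultaneously the decoy vertices are plausible centers, all robust against adjacency-\emph{matrix} probes (which let the algorithm test any pair $\{u,v\}$ for free of traversal). Adjacency-matrix probes are dangerous because they could let the algorithm binary-search for the center's neighbors; the construction must ensure that an adjacency-matrix probe on a ``wrong'' pair is overwhelmingly likely to return ``no edge'' and thus convey no information, which is exactly where the $n^{1/2}$ bound (rather than $n$) comes from — one effectively needs $\Theta(n^{1/2})$ probes to hit one of $\Theta(n^{1/2})$ hidden edges by a coupon/birthday argument, and the extra $1/r$ comes from amortizing over the $r$ edges of the star that one query-fleet must collectively resolve. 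I would handle this by planting the star's leaves as a random size-$s$ subset of a pool of $\Theta(n^{1/2})$ ``anonymous'' degree-2 vertices arranged in a path or cycle-free gadget, so that the center is identifiable only by finding two of its leaves, and finding even one requires hitting the right $\Theta(n^{1/2})$-sized needle.
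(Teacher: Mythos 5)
Your high-level plan (Yao's principle, a hidden star of size $\Theta(r)$ at a random center, arguing the LCA cannot tell which endpoint of a queried star edge is the center) is the same as the paper's, but two of your concrete steps would fail. First, the construction you actually propose at the end --- star leaves drawn from a pool of ``anonymous'' degree-2 vertices attached to a high-degree center --- is defeated by two degree probes: for any queried star edge $(u,v)$ the LCA sees one endpoint of degree $2$ and one of large degree, orients the edge toward the high-degree endpoint, and thereby gives the center in-degree (not out-degree) on all star edges, which is exactly the legal orientation. The whole difficulty, which you name as a ``design goal'' but do not resolve, is to make the \emph{two endpoints of each hidden edge} locally identical. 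The paper does this by making the hidden (red) star's leaves themselves high-degree vertices of the same type as the center: every vertex of the candidate set $A$ is the center of a large black star, degrees are equalized exactly via padding edges into a second vertex class $B$, and adjacency lists are randomly permuted, so degree probes carry no information and each probe of a deterministic algorithm reveals a colored edge with probability only $O(1/t)=O(r/\sqrt n)$.

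Second, your counting step is quantitatively lossy. You condition on the single global event that none of the probes made by all $\Theta(r)$ copies handling star edges hits a revealing location. In any construction of this type the per-probe reveal probability is $\Theta(r/\sqrt n)$ (the $\Theta(r)$ hidden edges sit inside neighborhoods or pools of size $\Theta(\sqrt n)$), so over $\Theta(r)\cdot q$ probes this event has constant probability only when $q=O(\sqrt n/r^2)$; your route therefore proves $\Omega(\sqrt n/r^2)$, not $\Omega(\sqrt n/r)$. The paper avoids the union over copies entirely: it argues \emph{per queried red edge} that with probability at least $2/3$ no colored edge is probed, hence the (deterministic, conditioned) answer is independent of which endpoint is the center, so each red edge is oriented away from the center with probability at least $2/3-1/2=1/6$; then linearity of expectation gives expected out-degree at least $s/6=4r$ (with $s=24r$, not $s=2r+1$ --- the constant must be large enough that a $1/6$ misorientation rate already exceeds $r$), and this contradicts the bound $0.9r+0.1\cdot 24r<4r$ implied by success probability $0.9$. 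You would need to replace your global conditioning by this per-edge expectation argument (and enlarge $s$) to reach the stated bound.
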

\begin{proof}
We assume that $r \le 0.001n^{1/2}$ since otherwise the statement is obvious. Suppose for contradiction that an LCA $\mcA$ exists which can perform the required orientation, using less than $0.001n^{1/2}/r$ probes per query.

We describe the construction of a random graph $G$, which we will use as input to $\mcA$. Define the parameters $s = 24r$ and $t = n^{1/2}/4s$. Note that $t > 10$. We construct $G$ as follows. 

First, we have a set $A$ of $|A|=st$ vertices. Each $a \in A$ is associated with a set $S_a$ of $|S_a|=st$ vertices, whose vertices are all connected to $a$ (forming a star). Also, there is a larger set $B$ of $|B| = s^2 t - 2s$ vertices; each $b \in B$ has a corresponding set $S_b$ with $|S_b| = t$ vertices, whose vertices are all connected to $b$. These constitute all the vertices in the graph. The total number of vertices in the graph is then $(st+1)|A| + (t+1)|B| < 4s^2t^2 < n$. So far, all the edges we have described are deterministic. 

We describe three different types of edges; for ease of reference, we assign each type as a color. Let the deterministic edges we have drawn so far be black. We refer to all other edges as colored edges.

\begin{figure}
    \begin{center}
        \includegraphics[width=0.9\linewidth]{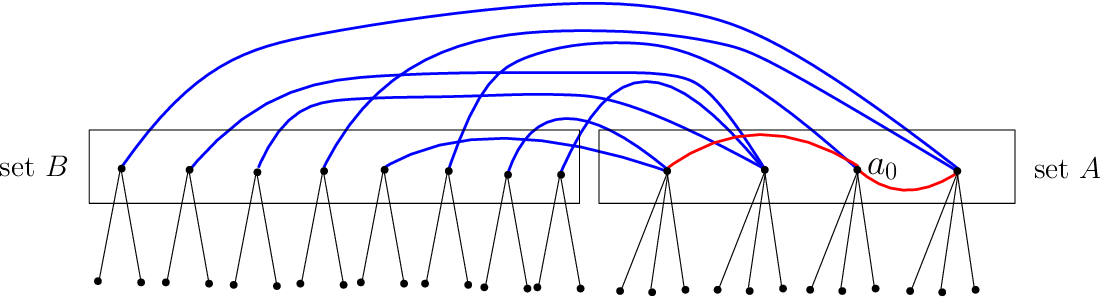}
    \end{center}
    \caption{Graph $G$ used to demonstrate lower bound}
    \label{fig:lbd}
\end{figure}

\begin{description}
    \item[Center vertex:] Pick a vertex $a_0 \in A$ uniformly at random; we refer to this vertex as the 
    \textit{center}. 
    \item[Red edges: ] Pick a set $A' \subseteq A$ of size $s$ not containing $a_0$ uniformly at random, and for each $a \in A'$, draw a red edge between $a$ and $a_0$; so, the red edges form a star centered at $a_0$. These red edges will form the ``hidden star'' which it will be difficult for the LCA to find.
    \item[Blue edges:] For each vertex $a \in A$, suppose that $a$ currently has degree $k$ ($k$ must equal one of $st$, $st+1$, or $st+s$, since $a$ has $st$ black edges and $0$, $1$, or $s$ red edges). Then, connect $a$ to $st + s - k$ vertices in $B$, so that each vertex in $B$ is connected to a single vertex in $A$. This matching is picked uniformly at random. The size of $B$ is exactly the sum of $st + s - k$ over all vertices $a$, so every vertex in $B$ can indeed have exactly one blue edge. The purpose of these blue edges is primarily to ensure that the vertices of $G$ have fixed degree, so that degree probes will give no extra information.
    \item[Permuted neighbors:] The adjacency list for each vertex is also generated as a random permutation of its neighbors.
\end{description}

This completes the description of the graph $G$. An example rendition of this graph is shown in \cref{fig:lbd}. The red edges form a star centered at $a_0$; in order to get an $r$-orientation, many of these edges will have to be oriented away from $a_0$.

We now outline the remainder of the proof. We will show that the LCA $\Aa$ cannot keep the out-degree of $a_0$ below $r$ with probability 0.9; in particular, we show that with probability over 0.1 it must orient at least $r$ of the red edges toward $a_0$. To prove this, we consider what $\Aa$ does when it receives a red edge as a query. We will show that, since the black edges constitute most of $G$, $\Aa$ will (with some probability) never be able to probe any colored edge at all and thus must return a deterministic orientation that is independent of which of the endpoints of the queried edge is $a_0$. 
It will follow that in expectation, many red edges must be oriented away from $a_0$, so the expected out-degree of $a_0$ will be large, which will complete the proof.

By Yao's minimax principle (stated in \cref{thm:yao}), we may assume that $\mc A$ is deterministic; it suffices to show that it is not possible that with probability at least $0.9$ (over the randomness of the graph $G$), the out-degree of $a_0$ is at most $r$.
Every vertex in $G$ has a fixed degree, even though some of the edges in $G$ are chosen randomly, so we may further assume that $\mc A$ never makes a degree probe.

We show that for each red edge, upon probing it with $\mc A$, there is at least a $1/3$ chance it orients toward the center $a_0$.

Consider the action of $\mc A$ upon receiving a probe of the red edge $(u, v)$, where $u, v \in A$, and the distribution of the graph $G$ is conditioned on $(u, v)$ being an edge of $G$ (so either $u$ or $v$ is $a_0$, each with probability $1/2$, and the other is in $A'$). We claim that with probability at least $1/6$, $\mc A$ never probes any colored edge (we assume that it does not probe $(u, v)$ in the adjacency matrix since it knows this edge is in $G$). As a reminder, we are assuming that $\mc A$ uses less than $0.001n^{1/2}/r$ probes per probe.

Consider the $i$-th probe that $\mc A$ makes, assuming that it has not probed any colored edges yet in the previous $i-1$ probes. If the first $i-1$ probes have not revealed any colored edges, then their results are deterministic and fixed, so the $i$-th probe is also deterministic (since we have assumed that $\mc A$ is deterministic). We claim that the probability that this fixed probe yields a colored edge is at most $1/t$. Since the probe is deterministic, it must be an adjacency list probe at a fixed index for a fixed vertex, or an adjacency matrix probe for two fixed vertices (every vertex has a fixed degree in $G$, so a degree probe does not reveal any information). We do casework on each of the possible probe types:
\begin{itemize}
\item Adjacency list probe for any vertex in $A$ (including $u$ or $v$): Each vertex in $A$ has $s$ colored edges and $st$ black edges. Since the adjacency list is permuted randomly, the probability that the probe selects a colored edge is $s/(st+s) < 1/t$.
\item Adjacency list probe for any vertex in $B$: Each vertex in $B$ has $1$ colored edge and $t$ black edges, so again, the probe selects a colored edge with probability $1/(t+1) < 1/t$.
\item Adjacency list probe for a vertex in $S_a$ or $S_b$ for some $a \in A$ or $b \in B$: These vertices have no colored edges, so the probe cannot return a colored edge.
\item Adjacency matrix probe between either $u$ or $v$ and another vertex in $A$: either $u$ or $v$ is $a_0$ with equal probability, and the probability that $a_0$ is connected to any other given vertex in $A$ (conditioned on it already being connected to the other one of $u$ and $v$) is $(s-1)/(st-2)$, so the probability that this edge returns a colored (red) edge is $(s-1)/2(st-2) \le s / 2st < 1/t$.
\item Adjacency matrix probe between any vertex $a \in A$ and any vertex $b \in B$: for any fixed $a \in A$, conditioned on the selection of the red edges, the neighbors of $a$ in $B$ are a uniformly random subset of $B$ of some fixed size which is at most $s$. Therefore, the probability that this probe returns an edge is at most $s / |B| = s / (s^2t - 2s) < 2s / s^2 t < 1/t$. 
\item Any other adjacency matrix probe: these cannot reveal any colored edge.
\end{itemize}

So we have shown that the probability that the first $i-1$ probes do not reveal any colored edges but the $i$-th probe does is at most $1/t$. It follows that the probability that any probe reveals a colored edge is at most $(0.001n^{1/2}/r)(1/t) < 1/3$.

Therefore, with probability $2/3$, when orienting edge $(u, v)$, conditioned on $G$ containing that edge, $\mc A$ does not probe any colored edges, and thus returns a deterministic orientation. Since $u$ and $v$ are each $a_0$ with (conditional) probability $1/2$, this means that $\mc A$ orients this edge away from $a_0$ with probability at least $2/3-1/2=1/6$. 

Now, the expected out-degree of $a_0$ is at least
\begin{align*}
&\quad \sum_{u, v \in A} \Pr[(u, v) \in G] \cdot \Pr[\x{$(u, v)$ oriented away from $a_0$} \mid (u, v) \in G] \\
&\ge \f 16 \sum_{u, v \in A} \Pr[(u, v) \in G] = \f 16 \Ex[\deg a_0] = \f s6 = 4r.
\end{align*}
We have shown that the expected out-degree of $a_0$ is at least $4r$, but what we actually want to show is that it cannot be at most $r$ with probability 0.9. Indeed, the out-degree of $a_0$ is bounded above by $s=24r$, so if it were most $r$ with probability at least $0.9$, then the expected out-degree of $a_0$ would be at most $0.9r + 0.1 \cdot 24r < 4r$, which would be a contradiction. 
This concludes the proof.
\end{proof}

\subsection{Algorithms for orientation in unbounded-degree graphs}
With the $\Omega(n^{1/2}/r)$ lower bound in mind that follows from \cref{prop:lower-bound}, we demonstrate some algorithms that achieve comparable upper bounds. Specifically, for $r$-orientation in forests, we can achieve LCA complexities of $O(1)$ when $r \ge n^{1/2}$, $\Ot(n/r^2)$ when $10n^{1/3} \le r \le n^{1/2}$, and $\Ot(n/r)$ otherwise.

\subsubsection{\texorpdfstring{$O(1)$ algorithm for large $r$}{O(1) algorithm for large r}}

First, we show the case where $r \ge n^{1/2}$. In this case, the algorithm is very simple: just orient edges toward the higher degree vertex. This also works with graphs of higher arboricity, with a tradeoff in $r$.
\begin{prop} \label{prop:ub-1}
There exists an LCA that can $(2\alpha n)^{1/2}$-orient an $n$-vertex $\alpha$-arboricity graph with probe complexity $O(1)$.
\end{prop}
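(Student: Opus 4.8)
The plan is to orient each edge toward its higher-degree endpoint (breaking ties by vertex ID), which requires only a single degree probe on each endpoint of the queried edge, giving probe complexity $O(1)$. The only thing to verify is that this yields a $(2\alpha n)^{1/2}$-orientation. Let $r = (2\alpha n)^{1/2}$, and suppose toward a contradiction that some vertex $u$ has out-degree greater than $r$. Then $u$ has more than $r$ neighbors $v$ with $\deg v \le \deg u$ (or $\deg v = \deg u$ and a smaller ID), so in particular $\deg u > r$ as well, and $u$ has more than $r$ neighbors each of degree at least... well, we only know their degree is at most $\deg u$, so I need to be a little more careful.

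The right way to see it: since $u$ has out-degree $> r$, there are $> r$ vertices $v$ adjacent to $u$ with $\deg v \le \deg u$; hence $\deg u > r$, so the set $W$ of vertices of degree at least $r$ is nonempty. More usefully, consider the set $H = \{v : \deg v > r\}$. Since $\sum_v \deg v = 2|E| \le 2\alpha n$ (using that an arboricity-$\alpha$ graph has at most $\alpha n$ edges), we get $|H| \cdot r < \sum_{v \in H} \deg v \le 2\alpha n = r^2$, so $|H| < r$. Now I claim every edge oriented out of $u$ either goes to a vertex in $H$ or else, hmm — that's not quite it either, since $u$ could orient toward many low-degree vertices. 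Let me reconsider: if $\deg u \le r$ then $u$'s total degree, and hence out-degree, is at most $r$, so $u$ is fine. So assume $\deg u > r$, i.e., $u \in H$. Every out-edge from $u$ goes to a vertex $v$ with $\deg v \le \deg u$, which need not be in $H$. So counting out-edges by their heads doesn't immediately work.

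The correct argument instead bounds out-degrees by a global counting/orientation argument. Order the vertices by $(\deg, \text{ID})$; the orientation sends each edge from the lower endpoint in this order to the higher one, so a vertex $u$'s out-neighbors are all earlier than $u$. Let $u$ be any vertex with $\deg u = d$. All of $u$'s out-neighbors $v$ satisfy $\deg v \le d$, but more to the point, consider the subgraph $G[L]$ induced by $L = \{v : (\deg v, \mathrm{ID}(v)) \le (\deg u, \mathrm{ID}(u))\}$, i.e., $u$ together with all earlier vertices. Every vertex in $L$ has degree at most $d$ in $G$, so $|E(G[L])| \le \frac{1}{2} d |L|$; but also $G[L]$ has arboricity at most $\alpha$, so $|E(G[L])| \le \alpha |L|$. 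The out-degree of $u$ equals the number of neighbors of $u$ inside $L$, which is at most $\deg_{G[L]} u$... this still needs the averaging to be turned into a per-vertex bound. The clean finish: in $G[L]$, by arboricity, some vertex has degree $\le \alpha$; but we want the bound for $u$ specifically. Since $u$ is the maximum-order vertex in $L$ and all out-edges of $u$ lie in $G[L]$, and $|E(G[L])| \le \alpha|L| \le \alpha d$ (as $|L| \le d+1 \le$ hmm, $|L|$ need not be $\le d$).

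I will instead just do it directly: if $\deg u > r = \sqrt{2\alpha n}$ then $\deg u > 2\alpha n / \deg u \ge 2|E|/\deg u$, and the number of vertices with degree $\ge \deg u$ is at most $2|E|/\deg u < \deg u$; among $u$'s out-neighbors, all have degree $\le \deg u$, and at most $2|E|/\deg u - 1 < r$ of all vertices have degree $\ge \deg u$, but low-degree out-neighbors are still a problem — so this genuinely requires the two-sided bound. The main obstacle, then, is exactly this final counting step: combining the degree upper bound on the "prefix" set $L$ with the arboricity bound $|E(G[L])| \le \alpha |L|$ to conclude every vertex's out-degree is below $\sqrt{2\alpha n}$. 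I expect the slick version is: if every vertex had out-degree $> r$, then summing, $|E| > rn/... $ — no. The honest resolution: take the vertex $u$ of smallest order with out-degree $> r$; then $G[L]$ with $L$ = vertices of order $\le$ that of $u$ has every vertex of out-degree... this is circular. I'll settle the details in the writeup via the standard fact that ordering by degree and orienting backward gives out-degree at most $O(\sqrt{m})$ for any graph (Chiba–Nishizeki style), refined by arboricity to $\sqrt{2\alpha n}$; the key inequality is $\deg u \le \sqrt{2|E|} \le \sqrt{2\alpha n}$ forced whenever out-degree exceeds that, which I'll nail down carefully. This one computation is the crux; everything else (probe count, consistency across LCA copies since the rule is purely local and deterministic) is immediate.
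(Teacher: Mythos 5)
There is a genuine gap, and it comes from a reversed inequality at the very first step. Your stated rule --- ``orient each edge toward its higher-degree endpoint'' --- is the paper's algorithm, but you then analyze it as if the out-neighbors of $u$ satisfy $\deg v \le \deg u$. Under that rule the opposite holds: the out-edge is charged to the \emph{lower}-degree endpoint, so every out-neighbor $v$ of $u$ has $\deg v \ge \deg u$ (up to tie-breaking). The ``low-degree out-neighbors'' obstacle that derails the rest of your argument only exists for the reversed rule (out-edges leaving the higher-degree endpoint), and that rule genuinely fails --- on a star, all $n-1$ edges would leave the center while the arboricity is $1$ --- so no counting trick (the set $H$, the prefix set $L$, the ``smallest-order vertex with out-degree $>r$'') can rescue it. As written, you explicitly defer the crux to a future writeup, so the proposal does not contain a complete proof.

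With the direction read correctly, the paper's argument closes in three lines and needs none of the machinery you reach for: if some vertex $u$ had out-degree at least $r=(2\alpha n)^{1/2}$, then $\deg u \ge r$ (out-degree is at most degree), and each of its at least $r$ out-neighbors has degree at least $\deg u \ge r$; hence the total degree of $G$ is at least $r^2 = 2\alpha n$, contradicting the fact that an arboricity-$\alpha$ graph has at most $\alpha(n-1)$ edges, i.e., total degree less than $2\alpha n$. Your observations about probe complexity and consistency (two degree probes, a deterministic local rule) are fine; the missing piece is exactly this one counting step, which becomes immediate once the orientation direction is kept straight.
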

\begin{proof}
Given edge $\{u, v\}$ with $\deg v \ge \deg u$, orient the edge from $v$ to $u$. (If the degrees are equal, then return an arbitrary orientation.) Suppose for the sake of contradiction that there is a vertex $v$ with out-degree at least $\sqrt{2 n \alpha}$. Then, it has at least $\sqrt{2 n \alpha}$ neighbors each with a degree at least $\sqrt{2 n \alpha}$, so the total degree of the graph is at least $2n\alpha$. But a graph with arboricity $\alpha$ has at most $\alpha(n-1)$ edges, a contradiction.
\end{proof}

\subsubsection{\texorpdfstring{$\Ot(\alpha n/r^2)$ algorithm for medium $r$}{\tilde O(alpha n/r\^{}2) algorithm for medium r}}

The following shows that we can use a similar idea for $r \ge 10n^{1/3}$.
We can orient all edges toward vertices that have a very large degree since there are very few of these vertices. With vertices of small degrees (less than $r$), the condition is trivially satisfied, so we can also orient all edges away from these vertices. This leaves the medium-degree vertices, of which there also cannot be too many.
We then essentially consider the subgraph induced by these medium vertices and repeat the previous idea, orienting toward the high-degree vertices. Again, this generalizes to arboricity $\alpha$.

\begin{prop} \label{prop:ub-2}
Suppose a graph $G$ with $n$ vertices and arboricity $\alpha$ is given as input.
Then, for all $10(\alpha^2 n)^{1/3} \le r \le (\alpha n)^{1/2}$, there is an LCA which, with high probability, $r$-orients the edges in $G$ with $\Ot(\alpha n/r^2)$ probes per query.
\end{prop}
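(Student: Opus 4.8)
The plan is to classify vertices by degree and process them in a small number of "layers," each time orienting edges toward the higher-degree side and recursing on a residual subgraph of vertices whose degrees are not yet small enough. Concretely, I would call a vertex \emph{low} if $\deg v < r$, \emph{high} if $\deg v \ge \alpha n / r$, and \emph{medium} otherwise. Since $G$ has at most $\alpha(n-1)$ edges, there can be at most $2\alpha n/(\alpha n/r) = 2r$ high vertices. Low vertices are harmless: we can orient all their incident edges outward, and their out-degree is at most $r$ by definition. The difficulty is the medium vertices; the idea is that in the subgraph induced on the medium vertices, the effective maximum degree has dropped, so one or two further rounds of "orient toward the denser side" finish the job. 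For $r \ge 10(\alpha^2 n)^{1/3}$ one checks $r^2 \ge 100 \alpha^{4/3} n^{2/3} \ge (\alpha n)\cdot(\text{something} \ge 1)$ so that two rounds suffice to bring the degree bound down to $r$.

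The algorithmic steps, in order: (1) On query $\{u,v\}$, first probe $\deg u$ and $\deg v$; if either endpoint is low, orient away from that vertex (breaking ties consistently, say toward the smaller ID if both are low). (2) Otherwise both endpoints have degree $\ge r$; now we must decide the orientation based only on the induced structure among non-low vertices. Here I would orient toward the endpoint of larger degree, but only after checking whether that endpoint is a "global high" vertex (degree $\ge \alpha n/r$): all edges are oriented toward global-high vertices, and since there are at most $2r$ of them, the residual graph on medium vertices that we must still worry about is sparse. (3) For the remaining medium–medium edges, recurse: within the medium subgraph, orient toward the higher-degree vertex (in $G$), and argue the out-degree any medium vertex accumulates from medium neighbors is small — if $v$ had out-degree $>r$ among medium neighbors each of degree $\ge r$ but $< \alpha n/r$, we would again get a contradiction with the edge count, modulo a logarithmic number of such refinement levels. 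The probe complexity: to answer a query we need, in the worst case, to enumerate the medium neighbors of an endpoint to decide which side is denser, but we only ever look at a neighbor's degree, and the number of neighbors we must inspect before certifying the orientation is $\Ot(\alpha n/r^2)$ because the medium subgraph restricted around a vertex is sparse; a high-probability argument (hashing neighbors to random priorities, inspecting in priority order, stopping once enough evidence is gathered) keeps this within $\Ot(\alpha n/r^2)$ with high probability.

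The main obstacle I anticipate is making step (3) genuinely \emph{local}: deciding the orientation of a medium–medium edge seems to require knowing which of the two endpoints "wins," and a naive rule comparing full degrees is fine, but controlling the out-degree contribution from \emph{all} layers simultaneously — low, high, and each medium refinement level — requires a careful accounting so that the per-layer out-degree budgets sum to at most $r$. I would handle this by using a geometric sequence of degree thresholds $r = \tau_0 < \tau_1 < \dots < \tau_k = \alpha n/r$ with ratio $2$, so $k = O(\log(\alpha n / r^2)) = O(\log n)$ levels; at level $i$ we orient toward vertices of degree $\ge \tau_i$, each such vertex absorbs edges but there are few of them ($\le 2\alpha n/\tau_i$), and a vertex of degree in $[\tau_{i-1},\tau_i)$ sends out at most $O(\alpha n / \tau_{i-1})$ edges at level $i$ while... — the crux is that we actually want each level to contribute $O(r/\log n)$ or, better, to argue the levels are mutually exclusive so only one level governs each vertex. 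The cleaner version, which I expect the authors use, fixes just the three classes (low / medium / high) plus one recursion on the medium class, where $r \ge 10(\alpha^2 n)^{1/3}$ is exactly the threshold making the medium subgraph have maximum degree $< \alpha n/r \le r^2/(10^? \alpha)$ small enough that one more "orient toward the denser side" step, as in \cref{prop:ub-1} applied to the medium subgraph with its own edge bound, yields out-degree $\le r$. The probe bound $\Ot(\alpha n/r^2)$ then comes from the size of the neighborhood one must scan within the medium subgraph, using shared randomness to sample neighbors and a standard concentration/union-bound argument to show with high probability every query terminates within that many probes.
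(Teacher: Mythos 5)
Your skeleton (classify vertices as low/medium/high by degree, orient away from low endpoints, toward the few high vertices, and treat medium--medium edges separately with sampling) matches the paper's proof. But the step that actually carries the argument --- the medium--medium edges --- is where your proposal breaks. Your primary rule, ``orient toward the higher-degree vertex \emph{in $G$},'' does not yield the claimed contradiction: if a medium vertex $v$ had out-degree more than $r$ under that rule, you would only conclude that $v$ has more than $r$ neighbors of degree at least $\deg v \ge r$, i.e.\ total degree more than $r^2$; since the regime is $r \le (\alpha n)^{1/2}$, we have $r^2 \le \alpha n$, which is perfectly consistent with the edge bound $\alpha(n-1)$. The quantity that must govern the decision is the degree in the \emph{medium-induced subgraph}, not in $G$: there are at most $2\alpha n/r$ medium vertices, the induced subgraph still has arboricity $\alpha$, so its total degree is $O(\alpha^2 n/r)$, and hence at most $O(\alpha^2 n/(rs)) = O(s)$ medium vertices have at least $s = r/10$ medium neighbors --- this counting is exactly where the hypothesis $r \ge 10(\alpha^2 n)^{1/3}$ enters, and it is absent from your write-up.

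The second missing piece is how to test this locally within the probe budget. Computing the medium-degree exactly costs up to $\Theta(\alpha n/r)$ degree probes, which is too many; the paper instead samples $\Ot(\deg u/s)$ neighbors of the \emph{lower-ID endpoint only} and applies a one-sided threshold test (orient away from $u$ if the sampled fraction of medium neighbors is small, toward $u$ otherwise), which a Chernoff bound makes reliable for distinguishing ``at most $s$'' from ``at least $3s$'' medium neighbors. Correctness then follows because any medium $w$ gets medium--medium out-edges only from queries decided at the other endpoint $u$ when $u$ is one of the $O(s)$ heavy-medium vertices, or from its own at most $3s$ medium neighbors; together with at most $2s$ edges toward large vertices this stays below $r$. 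Your ``cleaner version'' paragraph gestures at applying \cref{prop:ub-1} to the medium subgraph with sampled degrees, which is the right instinct, but comparing two noisy ``denser side'' estimates is never pinned down, and the $O(\log n)$-level geometric threshold scheme you fall back on has, as you yourself note, an unresolved budgeting problem and is not needed. As it stands, the proposal has a genuine gap in the central step.
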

\begin{proof}
We describe the algorithm. Let $s = r/10$. Call a vertex \textit{small} if its degree is at most $r$, \textit{large} if its degree is at least $\alpha n/s$, and \textit{medium} otherwise.

Suppose the algorithm receives a query of $e$ with endpoints $u, v$ (assume arbitrarily that $u$ has the smaller ID). Then it proceeds as shown in \cref{alg:medium-r}, splitting into three cases based on the degrees of its endpoints. The probe complexity in Cases 1 and 2 is $O(1)$, and in Case 3 is $\Ot(d/s) = \Ot(\alpha n/r^2)$, as desired. It remains to check correctness of the algorithm.

\begin{algorithm} \label{alg:medium-r}
    \Input{Graph $G$ with $n$ vertices and arboricity $\alpha$, with probe access \\
        Query edge $e=(u, v)$, where $u$ has lower ID than $v$
    }
    
    Perform degree probes on $u$ and $v$.

    \uIf{$u$ or $v$ is small} {
        \tcp{Case 1}
        Orient $e$ away from that vertex. (Pick arbitrarily if both are small.)
    }
    \uElseIf{$u$ or $v$ is large} {
        \tcp{Case 2}
        Orient $e$ toward that vertex. (Pick arbitrarily if both are large.)
    }
    \Else{
        \tcp{Case 3 ($u$ and $v$ are both medium)}
        
        Let $d = \deg u$. \tcp{Note that $r \le d \le \alpha n/s$}
        
        Sample $(1000d\log n)/s$ neighbors of $u$ independently and uniformly at random (using a fixed part of the shared randomness, depending on $e$).
        
        Perform a degree probe on each sampled neighbor.
        
        \uIf{at most a $2s/d$ proportion of the sampled neighbors are medium} {
            Orient $e$ away from $u$.
        }
        \Else {
            Orient $e$ toward $u$.
        }
    }

	\caption{
        LCA for $r$-orienting a graph when $r \ge 10(\alpha^2 n)^{1/3}$
    }
\end{algorithm}

\subparagraph{Correctness.} We wish to check that the out-degree of every vertex is at most $r$ with high probability. Small vertices obviously have out-degree at most $s \le r$, so we only need to check the medium and large vertices.

The total degree of $G$ is less than $2\alpha n$, so there are at most $2s$ large vertices. Thus, large vertices end up with an out-degree of $2s$ at most since edges from large to non-large vertices are oriented toward the large vertex.

It remains to check the medium vertices. Since the total degree of $G$ is $2\alpha n$, there are also at most $2\alpha n/r$ medium vertices. 

Now, consider a medium vertex $w$. No edges would have been oriented away from $w$ in Case 1, and since there at most $2s$ large vertices, there are at most $2s$ edges oriented away from $w$ in Case 2.

It remains to analyze the number of edges that are oriented away from $w$ in Case 3 (that is, the medium-to-medium edges). Note that by a Chernoff bound, with high probability, Case 3 orients a medium-to-medium edge $(u, v)$ away from $u$ if it has at most $s$ medium neighbors, and toward $u$ if it has at least $3s$ medium neighbors. We then show that there are at most $3s$ edges oriented away from $w$ in Case 3.

If $w$ has at most $3s$ medium neighbors, then this is obvious, so suppose $\deg w > 3s$. Consider the action of the algorithm when orienting an edge $e$ containing $w$. When we input $e=(u, v)$ into the algorithm, it is possible for either $u$ or $v$ to be $w$. If $u = w$, then it (with high probability) orients the edge toward $w$. Otherwise, if $v = w$, then (with high probability) $u$ must be a medium vertex with at least $s$ medium neighbors in order to orient $e$ toward $w$. Now, the subgraph of $G$ induced by the medium vertices also has arboricity $\alpha$, and thus has total degree at most $4\alpha^2 n/r$ by the bound on the number of medium vertices. Therefore there are at most $4\alpha^2 n/rs$ medium vertices $u$ with at least $s$ medium neighbors. Therefore, the number of edges oriented away from $w$ in Case 3 is at most $4\alpha^2 n/rs \le 3s$.

Thus, in total, every medium vertex has at most $5s < r$ edges oriented away from it, completing the proof.
\end{proof}

\subsubsection{\texorpdfstring{$\Ot(n/r)$ algorithm for all $r$, if $G$ is a forest}{\tilde O(n/r) algorithm for all r, if G is a forest}}

Finally, we consider the case of small $r$, in the case where $G$ is a forest. Here, we can still orient edges toward very large-degree vertices. Then, we randomly color all the remaining edges with $r$ colors and orient each color class separately (with maximum out-degree $1$ within each color class). We show that the color classes are small with high probability, bounding the LCA complexity. Note that this algorithm only applies to arboricity $\alpha=1$, i.e., the case of a forest.

\begin{prop} \label{prop:ub-3}
For all $r \le n^{1/2}$, there exists an LCA which, with high probability, $r$-orients edges in an $n$-vertex forest $G$ which uses $\Ot(n/r)$ probes per query.
\end{prop}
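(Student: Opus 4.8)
The plan is to peel off the few very-high-degree vertices and then handle the rest with a random $r$-edge-colouring, exactly as sketched before the statement.

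\textbf{The algorithm.} Call a vertex \emph{large} if $\deg_G v \ge 2n/r$ and \emph{small} otherwise. Since a forest has fewer than $n$ edges and hence total degree below $2n$, there are at most $r$ large vertices. I would orient every edge with a large endpoint toward that endpoint, breaking a tie between two large endpoints by vertex ID; this is decided with $O(1)$ probes (two degree probes and an ID comparison) and gives every large vertex out-degree at most $r-1$. Let $G'$ be the subforest induced by the small vertices, so $\Delta(G')<2n/r$. Using the shared randomness I would colour each edge of $G'$ with an independent uniform colour in $[r]$ (the colour of $e=\{u,v\}$ is read off from the part of the random string indexed by the pair $\{u,v\}$, so it is consistent and free of probes), and inside each colour class $G'_c$ — itself a forest — orient every edge toward the minimum-ID vertex of its tree, so that every vertex has out-degree at most $1$ within $G'_c$. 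A small vertex then has out-degree at most $r$ (at most one per colour, and $0$ toward large vertices), so the orientation is a valid $r$-orientation.

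\textbf{Answering a query.} Given a query edge $\{u,v\}$, degree-probe both endpoints; if either is large, apply the $O(1)$ rule above. Otherwise compute $c=\mathrm{colour}(\{u,v\})$ and run a BFS of the colour-$c$ component $C$ of $u$ in $G'$: at each discovered vertex $w$ scan its whole adjacency list and degree-probe each neighbour, so as to recognise the small colour-$c$ neighbours of $w$ and recurse on them; once $C$ is known, output the orientation of $\{u,v\}$ toward the minimum-ID vertex of $C$. This uses $\Theta\bigl(\sum_{w\in C}\deg_G w\bigr)$ probes, so the whole task reduces to showing this sum is $\Ot(n/r)$ with high probability; I would cap the BFS at a fixed budget $\Theta((n/r)\log n)$ and abort (returning an arbitrary orientation) if it is exceeded, which by the bound below happens on no edge with high probability.

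\textbf{Bounding the exploration.} In expectation this is easy: $\Ex\bigl[\sum_{w\in C}\deg_{G'}w\bigr]=\sum_{w}\deg_{G'}(w)\,(1/r)^{\mathrm{dist}_{G'}(u,w)}$, and separating the $w=u$ term (which is $<2n/r$) while bounding the total degree of each distance-$\ell$ sphere by $2|E(G')|<2n$ gives $\le 2n/r+\sum_{\ell\ge1}(1/r)^\ell\cdot 2n=O(n/r)$; this is where being a forest is used twice — to keep at most $r$ large vertices and to make $|E(G')|<n$ so the geometric series is dominated by its first term. Upgrading this to high probability is the main obstacle, because one BFS step can inflate $\sum_{w\in C}\deg_{G'}w$ by as much as $\Delta(G')<2n/r$, so a plain Azuma bound is far too weak. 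I would instead process the BFS edges one at a time and track the martingale $Y_i=S_i-\tfrac1r Q_i$, where $S_i$ is the total $G'$-degree of the vertices discovered after $i$ steps and $Q_i$ is the total $G'$-degree of all candidate vertices examined after $i$ steps (each colour query is on a fresh edge, so $Y$ really is a martingale); its increments are at most $\Delta(G')<2n/r$, and since the candidates are distinct vertices of $G'$ we have $Q_\infty<2n$ deterministically, so the predictable quadratic variation is below $(1/r)\,\Delta(G')\,Q_\infty<(2n/r)^2$. Freedman's inequality then yields $\sum_{w\in C}\deg_{G'}w=S_\infty=Q_\infty/r+Y_\infty=O((n/r)\log n)$ with probability $1-n^{-\Omega(1)}$; the edges from $C$ to large vertices contribute only $|C|+r=O((n/r)\log n)$ more probes (using $r\le\sqrt n\le n/r$), and a union bound over the fewer than $n$ edges of $G$ shows that with high probability every query costs $\Ot(n/r)$ probes, no query aborts, and the algorithm outputs a valid $r$-orientation.
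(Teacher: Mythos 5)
Your algorithm is essentially the paper's: orient edges incident to high-degree vertices toward those vertices, randomly color the remaining edges, and orient each color class toward the minimum-ID vertex of its colored component, found by a local search. However, your correctness accounting has a real slip: with the rule ``orient every edge with a large endpoint toward that endpoint,'' an edge between a small vertex $u$ and a large vertex $v$ is an \emph{out}-edge of $u$, so the parenthetical ``$0$ toward large vertices'' is backwards. A small vertex can have up to $r$ large neighbors (one per large vertex, and $2n/r \ge 2r$ when $r \le n^{1/2}$, so its degree does not prevent this), giving small vertices out-degree up to $r + r = 2r$ under your parameters --- a $2r$-orientation, not an $r$-orientation. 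The fix is exactly the constant rebalancing the paper performs: use $r/5$ colors and threshold $5n/r$ (or $r/2$ and $4n/r$), so the large-vertex count plus the number of colors stays below $r$; the complexity analysis is unaffected up to constants.

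On the probe bound, your route is valid but heavier than the paper's. You analyze the BFS exploration as a stopped martingale ($Y_i = S_i - Q_i/r$, increments bounded by $\D(G') < 2n/r$, predictable variation at most $(2n/r)^2$ since candidates in a forest are distinct) and apply Freedman's inequality; the checks you need (fresh edges, distinct candidates, $Q_\infty < 2n$) all go through because $G'$ is a forest. The paper instead dominates the number of explored edges by $O(n/r) + \sum_v X_v \deg v$, where $X_v$ is the indicator that the edge from $v$ to its parent (in the component of $G \setminus e$ rooted at the queried endpoint) has color $c$: these $X_v$ are indicators of \emph{distinct} edges, hence independent, and each term is bounded by $5n/r$ with total weight $\sum_v \deg v \le 2n$, so a plain Chernoff bound for bounded independent summands gives $\Ot(n/r)$ with high probability --- no martingale machinery needed. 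Both arguments yield the same conclusion; your version buys nothing extra here but would generalize to settings where the simple domination by independent indicators is unavailable.
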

\begin{proof}
Color each edge randomly with one of $r/5$ colors (using the shared randomness). Say that a vertex is \textit{large} if its degree is at least $5n/r$. Also, define an edge to be large if one of its vertices is large. Then, we proceed as described in \cref{alg:large-r}.


\begin{algorithm} \label{alg:large-r}
    \Input{Forest $G$ with $n$ vertices, with probe access \\
        Query edge $e=(u, v)$, where $u$ has lower ID than $v$
    }
    
    Using the shared randomness, assign an independent and uniformly randomly chosen color from $\{1, \dots, r/5\}$ to every pair of vertices in $G$ (so that every edge computes the same coloring).
    
    Perform degree probes on $u$ and $v$.
    
    \uIf{$u$ or $v$ is large} {
        Orient $e$ toward that vertex. (Pick arbitrarily if both are large.)
    }
    \Else {
        Let $c$ be the color of $e$.
        
        Perform a depth-first search from each of the endpoints of $e$ to find the connected component of the edge of color $c$, but stop the search at any large vertex. 
        
        Orient $e$ toward the minimum-ID vertex in the connected component (which is a tree). \tcp{Note that together, the edges in the connected component of color $c$ form an orientation of that connected component with out-degree at most 1.}
    }
    
	\caption{
        LCA for $r$-orienting a forest for any $r$
    }
\end{algorithm}

Since $G$ has a total degree at most $2n$, there are at most $2r/5$ large vertices, so at most $2r/5$ large edges are oriented away from each vertex. Also, at most one non-large edge of each color is oriented away from each vertex. So, in total, each vertex has an out-degree less than $r$. 

Thus, this algorithm gives a valid orientation. It remains to see that it has low probe complexity. Let $u$ be one of the endpoints of $e$. It is enough to check that the depth-first search starting at $u$ searches at most $\Ot(n/r)$ edges with a very high probability (by symmetry, the same will hold for the other endpoint of $e$).

Indeed, let $H$ be the connected component of $G \setminus e$ containing $u$; consider it as a tree rooted at $u$. For a vertex $v \neq u$ in $H$, let $X_v$ be the indicator random variable of whether the edge from $v$ to its parent has color $c$. 
Then, for the search to have to check the edges from $v$ to its descendants, $X_v$ must equal $1$ (and $v$ must also not be large). The search also always checks the edges from $u$ to its descendants (as long as $u$ is not large); there are at most $O(n/r)$ such edges. Therefore, the total number of edges checked is bounded above by the following quantity:
\[O(n/r) + \sum_{\x{non-large } v \in H \setminus u} X_v \deg v. \]
Since $\deg v \le 5n/r$ for each non-large $v$, the above sum is a sum of random variables bounded in $[0, 5n/r]$. The $X_v$ are all independent Bernoulli random variables with probability $5/r$, and the total degree of all vertices is at most $2n$. The expectation of the sum is then $O(n/r)$. Then, by a Chernoff bound, the sum is $\Ot(n/r)$ with high probability, so we are done.
\end{proof}

\section{Orientation in bounded-degree forests}
\label{sec:bounded-orientation}
Finally, we present an algorithm that $r$-orients a forest that has maximum degree $\D$, using at most $\D n^{1 - \log_\D r + o(1)}$ probes. The basic algorithm colors the edges of the graph with $r$ colors and orients each color separately with a maximum out-degree of 1. Then, the combined out-degree of any vertex is at most $r$. We analyze the probe complexity by deriving a Chernoff-style concentration bound on the size of any connected component of any color.

The probe complexity bound is meaningful when $r$ is at least polynomial in $\D$; otherwise, if $r \ge \D$, the problem is trivial. 
For example, if $\D=100$ and $r=10$, or if $\D = \log n$ and $r = \sqrt{\log n}$, then the resulting probe complexity is $n^{1/2+o(1)}$. Even if $\D$ is polynomial in $n$, this algorithm can still be nontrivial (though the $\D$ factor does become important). For example, if $\D = n^{0.2}$ and $r = n^{0.1}$, then the complexity is $n^{0.7+o(1)}$.

Specifically, we show the following theorem.

\begin{theorem} \label{thm:ub-bdd}
Given parameters $r \le \D$, there exists an LCA which $r$-orients an input forest $G$ with $n$ vertices and maximum degree $\D$. For any fixed $\eps$ and with very high probability, this algorithm uses at most $\D n^{1 - \log_\D r + \e}$ probes for every query, for large enough $n$.
\end{theorem}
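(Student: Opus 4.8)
The plan is to follow the outline already sketched before the theorem statement: color the edges of $G$ with $r$ colors (say uniformly and independently at random, using the shared randomness, by assigning a color to every unordered pair of vertices), and then orient each color class $c$ so that every tree in the subgraph $G_c$ of color-$c$ edges is oriented toward its minimum-ID vertex, giving out-degree at most $1$ per vertex per color, hence total out-degree at most $r$. Correctness is immediate; the entire content is the probe complexity. To answer a query on edge $e=(u,v)$ of color $c$, the LCA runs a search from $u$ (and from $v$) through color-$c$ edges to recover the connected component of $u$ in $G_c$, so the probe cost is $\D$ times the size of that component. Thus I must show: for every fixed $\e>0$, with very high probability (over the coloring) every color class has all its connected components of size at most $n^{1-\log_\D r + \e}$ — equivalently, starting from any vertex, the monochromatic component has at most that many vertices.

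The key step is a Chernoff-style concentration bound on the size of a monochromatic connected component in a random edge-coloring of a bounded-degree forest. First I would fix a vertex $v$ and a color $c$ and bound the probability that the color-$c$ component of $v$ has size at least $k$. Since $G$ is a forest, the component of $v$ in $G_c$ is a tree; I would like to say that a component of size $k$ must contain a connected subtree $T$ of $G$ on $k$ vertices, all of whose $k-1$ internal edges received color $c$, which happens with probability $r^{-(k-1)}$. The number of connected subtrees of a degree-$\D$ forest containing a fixed vertex $v$ and having $k$ vertices is at most $(e\D)^{k}$ or so (a standard count — e.g. via the fact that such subtrees correspond to certain lattice-path-like encodings, giving a bound like $\binom{\D k}{k} \le (e\D)^k$). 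A union bound over these subtrees then gives $\Pr[\text{component of }v\text{ in }G_c\text{ has }\ge k\text{ vertices}] \le (e\D)^{k} r^{-(k-1)} = r\cdot(e\D/r)^{k}$. Choosing $k$ on the order of $n^{1-\log_\D r+\e}$: writing $r = \D^{\log_\D r}$, we have $\D/r = \D^{1-\log_\D r}$, and $\D^{1-\log_\D r} = n^{(1-\log_\D r)\log_n \D}$... here I must be a little careful, since the exponent $1-\log_\D r+\e$ in the statement is in terms of $n$, whereas the ratio $e\D/r$ is a quantity that may or may not depend on $n$. The clean way is: $(e\D/r)^k = n^{-\Omega(k \log_n(r/\D)/\log \D \cdot \text{stuff})}$ — I would instead directly observe that $\log(e\D/r) \le -(\log r - \log \D)(1-o(1))$ when $r \ge \D^{\Omega(1)}$, so $(e\D/r)^k \le \exp(-k(1 - \log_\D r + o(1))\log\D)$, and with $k = n^{1-\log_\D r+\e}$ this probability is at most $\exp(-n^{\e/2})$ or smaller, i.e. very high probability. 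Then a union bound over all $n$ choices of $v$ and all $r \le \D \le n$ choices of $c$ costs only a factor $n^2$, which is absorbed.

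The main obstacle I anticipate is making the subtree-counting bound and the resulting exponent bookkeeping rigorous and uniform across the full parameter range $r \le \D$ (including $\D$ as large as a polynomial in $n$, where the statement degrades but is claimed to still be nontrivial, and the $o(1)$ must genuinely go to $0$ with $n$ regardless of how $\D$ and $r$ scale). In particular I need the inequality $\log_\D(e\D) = 1 + \log_\D e \le 1 + o(1)$ to hold, which requires $\D \to \infty$; when $\D$ is a constant the $\log_\D e$ term is an absolute constant, not $o(1)$, so I must instead charge it against the slack $\e$ and use that $n^\e$ dominates any constant-base exponential savings once $n$ is large — this is exactly where the "$+\e$" in the exponent and the "very high probability / large enough $n$" qualifiers are used, and I would be careful to state the threshold for "large enough $n$" as depending on $\e$ (and $\D$, if $\D$ is constant). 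A secondary technical point is handling the search from both endpoints and the shared-randomness implementation of the coloring so that all copies of the LCA agree; this is routine and mirrors \cref{alg:large-r}. Finally I would note the search can be truncated once it exceeds the size bound (returning an arbitrary orientation on that low-probability event) so the worst-case probe bound is deterministic $\D n^{1-\log_\D r+\e}$, with correctness holding with very high probability.
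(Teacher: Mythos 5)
Your algorithm and the reduction of the theorem to a component-size bound are exactly the paper's (\cref{alg:bdd} plus \cref{prop:tree-chernoff}), but the proof you propose for the concentration bound does not work. Since $r \le \Delta$, the ratio in your union bound satisfies $e\Delta/r \ge e > 1$, so the bound $\Pr[\,|C_v| \ge k\,] \le (e\Delta)^k r^{-(k-1)} = r\,(e\Delta/r)^k$ exceeds $1$ for every $k$ and gives nothing. The inequality you rely on, $\log(e\Delta/r) \le -(\log r - \log\Delta)(1-o(1))$, has the wrong sign: $\log(e\Delta/r) = 1 + \log\Delta - \log r \ge \log\Delta - \log r \ge 0$, so $(e\Delta/r)^k$ is exponentially \emph{large} in $k$, not small. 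Nor is this fixable by a sharper subtree count: in a complete $\Delta$-ary tree the number of $k$-vertex subtrees containing the root really does grow like $c_\Delta^{\,k}$ with $c_\Delta = \Delta^\Delta/(\Delta-1)^{\Delta-1} > \Delta \ge r$, so the expected number of fully monochromatic $k$-subtrees through $v$ is exponentially large in $k$ even when $\Pr[\,|C_v|\ge k\,]$ is tiny. The reason is structural: coloring with $r \le \Delta$ colors is edge percolation at retention probability $1/r \ge 1/\Delta$, which is critical or supercritical on a $\Delta$-ary tree; monochromatic components genuinely reach polynomial size (about $n^{1-\log_\Delta r}$, which is why that exponent appears in the statement), and the many subtrees inside one large component are massively positively correlated, so the first-moment/union-bound method is intrinsically too lossy here.

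What is needed instead is a direct upper-tail bound on the component size itself, and this is the actual content of the paper's \cref{prop:tree-chernoff}: one proves an exponential-moment bound $\Ex[e^{tX_T}] \le 1 + f(m)$ by induction on the subtree size, with $t$ chosen on the order of $n^{-1-\log_\Delta p - 4\eps/5}$ and $f$ concave, using independence of the edge indicators from the root to its children; Markov's inequality on $e^{tX_T}$ then gives the $e^{-n^{\eps/5}}$ tail. That MGF induction (or some equivalent martingale/branching-process truncation argument) is the missing ingredient in your write-up, and it is where all the exponent bookkeeping you worried about actually lives. Two smaller points: the paper also needs a separate (trivial) case when $\Delta^{1-\eps/2} \le r \le \Delta$, since then $1-\log_\Delta r < \eps/2$ and one simply computes a $\Delta^{1-\eps/2}$-orientation; and your remark about truncating the DFS to make the probe bound worst-case is fine but not what the theorem claims (it only asserts the probe bound with very high probability).
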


We show this by showing a concentration bound on the size of connected components in a random subset of a tree:

\begin{prop}\label{prop:tree-chernoff}
Fix a constant $\e$. Suppose the edges of a forest $G$ with $n$ vertices and maximum degree $\D$ are each colored independently with probability $p \ge \D^{-1+\eps}$. Then, with very high probability (in $n$), every connected component of colored edges has size at most $n^{1+\log_\D p + \eps}$.
\end{prop}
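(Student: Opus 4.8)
The plan is to fix a vertex $v$, root the tree of $G$ containing $v$ at $v$, control the size $C_v$ of the colored connected component containing $v$ by a recursive exponential-moment (Chernoff-style) computation, and then union bound over the $n$ choices of $v$ --- this suffices, since some colored component has size at least $k$ if and only if $C_v\ge k$ for some vertex $v$. For the moment computation, for a vertex $w$ let $C_w^{\downarrow}$ be the size of the colored component of $w$ inside the subtree $T_w$ rooted at $w$, so that $C_v^{\downarrow}=C_v$ and $C_w^{\downarrow}=1+\sum_{w'}X_{ww'}C_{w'}^{\downarrow}$, where $w'$ ranges over the children of $w$, $X_{ww'}\sim\Bern(p)$, and all the $X_{ww'}$ and $C_{w'}^{\downarrow}$ are mutually independent; hence $M_w(\lambda):=\Ex[e^{\lambda C_w^{\downarrow}}]$ satisfies $M_w(\lambda)=e^{\lambda}\prod_{w'}(1-p+pM_{w'}(\lambda))$. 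Introducing a small parameter $\delta=\Theta(\eps/\log_\D n)$, the inflated probability $p'=e^{\delta}p=(1+o(1))p$, and the deterministic weights $g_w:=\sum_{u\in T_w}(p')^{d(w,u)}$ (with $d(\cdot,\cdot)$ graph distance), which satisfy the same shape of recursion $g_w=1+p'\sum_{w'}g_{w'}$, I would prove by induction on the height of $T_w$ that $M_w(\lambda)\le e^{\lambda g_w}$, using the elementary inequality $1-p+pe^{y}\le e^{p'y}$ for $0\le y\le\delta$. The induction step is valid as long as $\lambda g_{w'}\le\delta$ for every child $w'$, which we will secure by taking $\lambda$ small relative to a uniform upper bound on the $g_w$'s.

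That uniform bound is the combinatorial heart of the argument, and it is where boundedness of the degree enters: I claim $g_w\le C_\eps\,n^{1+\log_\D p}$ for every $w$, with $C_\eps$ depending only on $\eps$. Grouping by distance, $g_w=\sum_{d\ge 0}m_d(p')^{d}$ where $m_d$ is the number of vertices of $T_w$ at distance $d$ from $w$; here $\sum_d m_d\le n$ and $m_d\le\D^{d}$. Using $m_d\le\min(\D^d,n)$ and splitting the sum at $d=\log_\D n$, the first part is a geometric series with ratio $\D p'\ge\D^{\eps}>1$ (using $\D\ge 2$) and the second is a geometric series with ratio $p'<1$; both evaluate to $O_\eps(1)\cdot n(p')^{\log_\D n}$, and $n(p')^{\log_\D n}=O_\eps(1)\cdot n^{1+\log_\D p}$ because $(p'/p)^{\log_\D n}=e^{\delta\log_\D n}=O_\eps(1)$ and $p^{\log_\D n}=n^{\log_\D p}$. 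The prefactors $\frac{\D p'}{\D p'-1}$ and $\frac1{1-p'}$ are bounded purely in terms of $\eps$, since $\D\ge 2$ keeps $\D^{-\eps}$ --- and hence $p'$ --- bounded away from $1$; and if instead $p\ge\D^{-\eps}$, the claimed exponent $1+\log_\D p+\eps$ is at least $1$ and the statement is trivial (no component exceeds $n$), so we may assume $p<\D^{-\eps}$.

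With this in hand, set $g^{*}:=C_\eps n^{1+\log_\D p}$ and $\lambda:=\delta/g^{*}$, so the induction applies uniformly and $M_v(\lambda)\le e^{\lambda g_v}\le e^{\lambda g^{*}}=e^{\delta}$. For $k:=n^{1+\log_\D p+\eps}$, Markov's inequality on $e^{\lambda C_v}$ gives $\Pr[C_v\ge k]\le e^{-\lambda k+\delta}$; since $k/g^{*}=n^{\eps}/C_\eps$ dwarfs any polylogarithmic factor, $\lambda k=\delta\,k/g^{*}\gg n^{\eps/2}$ for large $n$, so $\Pr[C_v\ge k]\le e^{-n^{\eps/2}}$. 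A union bound over the $n$ vertices gives failure probability $e^{-\Omega(n^{\eps/2})}$, which is smaller than $n^{-c}$ for every constant $c$; this is exactly the ``very high probability'' guarantee.

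The step I expect to be the main obstacle is the combinatorial bound on $\max_w g_w$: one must sum the geometric series so that it telescopes to $(\D p')^{\log_\D n}=n(p')^{\log_\D n}$ rather than $\D\cdot(p')^{\log_\D n}$, since an extra factor of $\D$ would be fatal when $\D$ is large (e.g.\ polynomial in $n$), and achieving this requires combining $\sum_d m_d\le n$ with $m_d\le\D^d$ in the right way. It is also worth noting why one cannot take the naive route of union-bounding over all $k$-vertex subtrees containing $v$ and multiplying by $p^{k-1}$: the expected number of fully colored large subtrees is astronomically large even though a large component is extremely unlikely, so a first-moment count cannot work and the recursive exponential-moment computation is genuinely needed.
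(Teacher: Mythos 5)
Your proposal is correct, and its skeleton is the same as the paper's: root the tree at each vertex $v$, bound the exponential moment of the colored-component size by a recursion over children, apply Markov, and union bound over $v$. Where you differ is in the inductive invariant and the key lemma behind it. The paper fixes $t = n^{-1-\log_\D p - 4\eps/5}$ and proves $\Ex[e^{tX_T}] \le 1 + f(m)$ by induction on the subtree size $m$, where $f(m) \propto m^{1+\log_\D p + o(1)}$; the degree bound enters only through concavity of $f$, which lets the children's contributions $\sum_i p f(m_i)$ be absorbed into $p\D f(m/\D)$. You instead prove the multiplicative bound $M_w(\lambda) \le e^{\lambda g_w}$ by induction on height, where $g_w = \sum_{u \in T_w}(p')^{d(w,u)}$ is essentially the expected component size at the slightly inflated probability $p' = e^\delta p$, and you supply a separate combinatorial lemma, $g_w = \sum_d m_d (p')^d \le O_\eps(n^{1+\log_\D p})$ via $m_d \le \min(\D^d, n)$ and splitting the sum at $d = \log_\D n$ — this is where the degree bound enters in your argument, as you correctly flag, and your handling of the edge cases ($p \ge \D^{-\eps}$ trivial; prefactors $\frac{\D p'}{\D p'-1}$, $\frac{1}{1-p'}$ bounded in terms of $\eps$; $e^{\delta \log_\D n} = O_\eps(1)$) checks out, as does the elementary inequality $1-p+pe^y \le e^{p'y}$ for $0 \le y \le \delta$ via $e^y - 1 \le y e^\delta$. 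The trade-off: your version is arguably more transparent, since the invariant has a probabilistic meaning (first moment at inflated $p$) and the source of the exponent $1+\log_\D p$ is made explicit by the layered count, at the cost of an extra parameter ($\delta$, $p'$) and a two-step structure (uniform bound on $g_w$, then the moment induction with $\lambda = \delta/g^*$); the paper's version is more compact, folding everything into one induction on $m$ with a cleverly tuned $f$, but the choice of exponent in $f$ is less motivated. Both yield failure probability $e^{-n^{\Omega(\eps)}}$ per vertex, which survives the union bound.
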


Before we prove \cref{prop:tree-chernoff}, we show how we can deduce \cref{thm:ub-bdd} from it:

\begin{proof}[Proof of \cref{thm:ub-bdd} assuming \cref{prop:tree-chernoff}]

First, assume that $r \le \D^{1-\eps/2}$ (we will handle the other case at the end). Then the algorithm is given in \cref{alg:bdd}.

\begin{algorithm} \label{alg:bdd}
    \Input{Forest $G$ with $n$ vertices and maximum degree $\D$, with probe access \\
        Query edge $e=(u, v)$, where $u$ has lower ID than $v$
    }
    
    Using the shared randomness, assign an independent and uniformly randomly chosen color from $\{1, \dots, r\}$ to every pair of vertices in $G$ (so that every edge computes the same coloring).
    
    Let $c$ be the color of $e$.
    
    Using adjacency list probes, perform a depth-first search to find the connected component of $e$ within the edges of $G$ with color $c$.
    
    Orient $e$ toward the minimum-ID vertex in the connected component (which must be a tree). \tcp{This ensures that every vertex has at most 1 edge of color $c$ oriented away from itself.}
    
	\caption{
        LCA for $r$-orienting a bounded-degree forest
    }
\end{algorithm}

Since each vertex has at most 1 edge of color $c$ oriented away from itself, the total out-degree of any vertex is at most $r$, as desired.

Using \cref{prop:tree-chernoff} (with $\eps/2$ instead of $\eps$), with high probability this has size at most $n^{1 - \log_\D r + \e/2}$ (with a union bound over all colors). Then, the depth-first search (using the adjacency lists) takes at most $\D n^{1 - \log_\D r + \e/2}$ probes since it must probe the entire adjacency list of every vertex in this connected component. 

Finally, we handle the case where $\D^{1-\eps/2} \le r \le \D$ (if $r \ge D$ then an arbitrary orientation works). By the previous logic, we can still get a $\D^{1-\eps/2}$-orientation (which is also an $r$-orientation) with probe complexity $\D n^{\eps}$. But we have $\D n^{\eps} \le \D n^{1 - \log_\D r + \e}$, so we are done.
\end{proof}

\subsection{Proof of \cref{prop:tree-chernoff}}
By taking a union bound over connected components, we can assume $G$ to be a tree. Pick an arbitrary fixed vertex $v$, color it, and consider $G$ to be a rooted tree rooted at $v$. We will show that with very high probability the colored connected component at $v$ has size at most $n^{1+\log_\D p + \eps}$. We prove this using a Chernoff bound-style method.
Indeed, if we prove the following claim, then we are done by a union bound over $v$.

\begin{claim}
Fix a constant $\eps$, and let $n$ be sufficiently large. Let $T$ be a rooted tree with $n$ vertices and with maximum degree $\D$. Color each edge of $T$ independently with probability $p > \D^{-1+\eps}$ and let $X_T$ be the number of vertices in the colored connected component of the root of $T$.
Then, we have 
\[\Pr[X_T \ge n^{1+\log_\D p + \eps}] < e^{-n^{\eps/5}}.\]
\end{claim}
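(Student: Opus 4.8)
The plan is to establish this Chernoff-style bound through an exponential moment (moment generating function) computed recursively over the tree. Write $a := 1 + \log_\D p$, so the target is $\Pr[X_T \ge n^{a+\eps}] < e^{-n^{\eps/5}}$. First I would dispose of an easy regime: if $p \ge \D^{-\eps/2}$ then $a \ge 1 - \eps/2$, so $n^{a+\eps} > n \ge X_T$ deterministically and the probability is $0$. Hence I may assume $\D^{-1+\eps} < p < \D^{-\eps/2}$, and in this regime I set the exponent $b := a + \eps/2$, which then satisfies $0 < b < 1$, and I record the key algebraic identity $p\,\D^{1-b} = p\cdot\D^{1-a}\cdot\D^{-\eps/2} = \D^{-\eps/2}$ (using $\D^{1-a} = 1/p$) --- a quantity strictly below $1$, with a gap bounded below by a constant depending only on $\eps$.

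Next I would set up the recursion. Rooting $T$ at its given root $v$, let $v$ have children $u_1,\dots,u_k$ with $k \le \D$, and let $T_i$ be the subtree hanging off $u_i$, with $m_i := |T_i|$. If $B_i \sim \Bern(p)$ is the indicator that the edge $v u_i$ is colored, then $X_T = 1 + \sum_i B_i X_{T_i}$ with all these variables independent, so for any $\theta > 1$ the exponential moment $\Phi_{T'}(\theta) := \Ex[\theta^{X_{T'}}]$ of a rooted subtree $T'$ satisfies $\Phi_T(\theta) = \theta\prod_{i=1}^k\bigl(1 - p + p\,\Phi_{T_i}(\theta)\bigr)$.

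The core step is to fix $\theta = \exp\!\bigl(\eps'/(Kn^b)\bigr)$ for suitable constants $\eps', K > 0$ depending only on $\eps$ (concretely one can take $\eps' = (2^{\eps/2}-1)/2$ and $K = 2/(1-2^{-\eps/2})$), and to prove by induction on $|T'|$ that $\Phi_{T'}(\theta) \le \theta^{K|T'|^b}$ for every rooted subtree $T'$ of $T$. The base case $|T'|=1$ is immediate since $K \ge 1$. For the inductive step, set $z_i := K m_i^b\ln\theta$ and note $z_i \le Kn^b\ln\theta = \eps' \le 1$; then using $1 + x \le e^x$, $e^{z_i} - 1 \le z_i + z_i^2 \le (1+\eps')z_i$, and the concavity of $t \mapsto t^b$ (which gives $\sum_i m_i^b \le \D^{1-b}(m-1)^b$), one obtains $\Phi_{T'}(\theta) \le \theta^{\,1 + (1+\eps')\,p\D^{1-b}\,K\,(m-1)^b} = \theta^{\,1 + (1+\eps')\D^{-\eps/2}\,K\,(m-1)^b}$; the choice of $\eps', K$ makes the exponent at most $Km^b$, since $m^b - (1+\eps')\D^{-\eps/2}(m-1)^b \ge \bigl(1 - (1+\eps')\D^{-\eps/2}\bigr)m^b \ge m^b/K \ge 1/K$. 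Applying this at $T' = T$ and then Markov's inequality gives $\Pr[X_T \ge n^{a+\eps}] \le \Phi_T(\theta)\,\theta^{-n^{a+\eps}} \le \theta^{\,Kn^b - n^{a+\eps}} = \exp\!\bigl(\eps' - (\eps'/K)n^{\eps/2}\bigr)$, which is below $e^{-n^{\eps/5}}$ for $n$ large enough.

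The step I expect to be the main obstacle is choosing the right exponent $b$ in the induction. The natural guess $b=a$ does not work: $\Ex[X_T]$ can already be $\Theta(n^a)$ (tight for a balanced $\D$-ary tree), and with $b=a$ the per-level ``contraction factor'' $p\D^{1-a}$ equals exactly $1$, so the additive $+1$ contributed by the root cannot be absorbed (one would need $K(m^a - (m-1)^a) \ge 1$, but $m^a - (m-1)^a \to 0$). Pushing $b$ a hair above $a$ creates the needed multiplicative slack, while $b$ can still be kept below $1$ precisely because we are in the regime $a < 1-\eps/2$ --- this is why the easy-case reduction at the start is essential. A related subtlety is that for $b<1$ the map $t \mapsto t^b$ is concave, so the extremal way to split a subtree among children is the balanced one, which is exactly where $\sum_i m_i^b$ is as large as $\D^{1-b}(m-1)^b$ and where the identity $p\D^{1-b} = \D^{-\eps/2} < 1$ is needed to close the induction.
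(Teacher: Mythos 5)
Your proof is correct and takes essentially the same route as the paper: bound the exponential moment $\Ex[\theta^{X_T}]$ by induction over the rooted tree via the decomposition $X_T = 1 + \sum_i Y_i X_{T_i}$, use concavity of $m \mapsto m^b$ (exponent below $1$) to control the split among the $\le \D$ children, and finish with Markov's inequality. The only differences are parameter choices --- you split off the trivial regime $p \ge \D^{-\eps/2}$ so you can use a constant exponent boost $\eps/2$ and a multiplicative inductive bound $\theta^{K m^b}$, whereas the paper uses a vanishing boost $2n^{-\eps/5}/\log\D$ and the additive bound $1+f(m)$ --- and your version in fact gives a slightly stronger tail of the form $e^{-\Omega(n^{\eps/2})}$.
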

\begin{proof}

Let $t = n^{-1-\log_\D p - 4\eps/5}$, and let $f(m) = t n^{2\e/5} m^{1 + \log_\D p + 2n^{-\eps/5}/\log \D}$. Note that the exponent on $m$ is between 0 and 1 for large enough $n$; this means that $f$ is concave and increasing. We will show that $\Ex[e^{X_T t}] \le 1 + f(n)$.

We show more generally that if $T$ has $m \le n$ vertices, then $\Ex[e^{X_T t}] \le 1 + f(m)$. We show this by induction on $m$; this is obviously true for $m=0$ (the vacuous case of an empty tree where $X_T$ is always 0). 

Let $T_1, \dots, T_\D$ be the subtrees of $T$, with sizes $m_1, \dots, m_\D$ (some of them may be empty with $m_i=0$). The sum of the $m_i$ is then $m-1$. Then we can write
\[X_T = 1 + Y_1X_{T_1} + Y_2X_{T_2} + \dots + Y_\D X_{T_\D},\]
where each $Y_i$ a Bernoulli variable with probability $p$ indicating whether the edge from the root of $T$ to the root of $T_i$ is colored. In the case that $T_i$ is the empty tree, we can let $Y_i$ still be a Bernoulli variable with probability $p$. Note that all the random variables in the right hand side above are independent.

We have that $e^{Y_i X_{T_i}t}$ is equal to $e^{X_{T_i}t}$ with probability $p$ (independent of $X_{T_i}$) and equal to 1 otherwise. Thus, we can write
\[
    \Ex[e^{Y_i X_{T_i}t}] = 1 - p + p\Ex[e^{X_{T_i}t}] \le 1+pf(m_i),
\]
since we have $\Ex[e^{X_{T_i}t}] \le 1+ f(m_i)$ by the inductive hypothesis. Therefore,
\begin{align*}
\Ex[e^{X_T t}] 
&= e^t \prod_{i=1}^\D \Ex[e^{Y_i X_{T_i}t}] \\
&\le e^t \prod_{i=1}^\D (1+ pf(m_i)) \\
&\le \exp\p{t + \sum_{i=1}^\D pf(m_i)}.
\end{align*}
The function $f$ is concave and increasing, and the sum of the $m_i$ is $m-1$, so we can bound the above as follows:
\begin{align*}
\Ex[e^{X_T t}]
&\le \exp\p{t + p\D f\paf{m-1}{\D}} \\
&\le \exp \p{t + p\D f\paf{m}{\D}}.
\end{align*}
We first bound the quantity inside the exponential:
\begin{align*}
t + p\D f\paf{m}{\D}
&= t + p \D t n^{2\e/5} \paf{m}{\D}^{1 + \log_\D p + 2n^{-\e/5}/\log \D} \\
&= t + p \D t n^{2\e/5} \cdot \f{m^{1 + \log_\D p + 2n^{-\e/5}/\log \D}}{p\D e^{2n^{-\e/5}}} \\
&= t + t n^{2\e/5} {e^{-2n^{-\e/5}}} m^{1 + \log_\D p + 2n^{-\e/5}/\log \D} \\
&\le \p{1+n^{-2\e/5}e^{2n^{-\e/5}}} t n^{2\e/5} {e^{-2n^{-\e/5}}} m^{1 + \log_\D p + 2n^{-\e/5}/\log \D} \\
&\le \p{1+n^{-\e/5}} t n^{2\e/5} {e^{-2n^{-\e/5}}} m^{1 + \log_\D p + 2n^{-\e/5}/\log \D} \\
&\le \p{1+n^{-\e/5}}^{-1} t n^{2\e/5} m^{1 + \log_\D p + 2n^{-\e/5}/\log \D} \\
&= \p{1+n^{-\e/5}}^{-1} f(m).
\end{align*}

Here, in the first inequality, we have used the fact that $m \ge 1$ and its exponent is positive. Let $x = (1+n^{-\e/5})^{-1} f(m)$; we then have $\Ex[e^{X_T t}] \le e^x$, so it remains to show that $e^x \le 1 + f(m)$. Since $x$ is slightly smaller than $f(m)$, we show this by showing that $x$ is sufficiently small. Indeed, we have
\[x \le f(m) \le f(n) = n^{-2\e/5 + 2n^{-\eps/5}/\log \D} \le n^{-\eps/5}.\]
In particular, this means that $x < 1$, so \[e^x \le 1 + x + x^2 \le 1 + (1 + n^{-\eps/5})x = 1+ f(m).\]
Therefore we have shown that $\Ex[e^{X_T t}] \le 1+f(m)$, completing the induction.

Now, we have $\Ex[e^{X_T t}] \le 1 + f(n)$ for the original tree with $n$ vertices. As above, we have $f(n) \le n^{-\eps/5} < 1$. However, if $X_T \ge n^{1+\log_\D p + \eps}$, then we would have $e^{X_T t} \ge e^{n^{\eps/5}}$. Thus, by Markov's inequality, the probability that $X_T \ge n^{1+\log_\D p + \eps}$ is less than $e^{-n^{\eps/5}}$. This completes the proof of the claim, and thus the proof of \cref{prop:tree-chernoff}.
\end{proof}

\subsection{Coloring bounded-degree forests}
The concentration bound in \cref{prop:tree-chernoff} also has implications for vertex-coloring in bounded-degree forests. In particular, we show that a randomized LCA can 4-color any graph with bounded (constant) degree with sublinear probe complexity. This is in contrast to the lower bound of \cite{brandt2021randomized}, which shows that it is impossible to do so with a sublinear deterministic VOLUME algorithm (VOLUME is a model very similar to, but slightly weaker than, the LCA model).

\begin{theorem} \label{thm:color}
Let $r, \D$ be fixed constants with $2 \le r \le \log_2 \D$. There exists an LCA which, given a forest $G$ with $n$ vertices and maximum degree $\D$, $2^r$-colors its vertices. For any fixed $\eps$ and with very high probability, this algorithm uses at most $n^{1 - \log_\D r + \e}$ probes for every query, for large enough $n$.
\end{theorem}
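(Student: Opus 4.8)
\medskip

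The plan is to mirror the algorithm behind \cref{thm:ub-bdd}. Using the shared randomness, assign to every pair of vertices of $G$ an independent uniformly random color in $\{1,\dots,r\}$, and for each $c\in\{1,\dots,r\}$ let $G_c$ be the subgraph of $G$ consisting of the edges that received color $c$. To each vertex $v$ we attach a vector $b(v)=(b_1(v),\dots,b_r(v))\in\{0,1\}^r$, which will be its color (so the palette has size $2^r$): letting $H_c(v)$ denote the connected component of $v$ in $G_c$ --- a tree, since $G$ is a forest, possibly the single vertex $v$ --- we root $H_c(v)$ at its minimum-ID vertex and let $b_c(v)$ be the parity of the distance from $v$ to that root. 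To answer a query for the color of $v$, the LCA computes, for each $c$, the component $H_c(v)$ by a depth-first search from $v$ that follows only edges of color $c$; the colors of pairs are read directly from the shared randomness, so probes are spent only on reading adjacency lists. It then outputs $b(v)$.

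This is a proper coloring: if $e=(u,v)$ is an edge of $G$ with color $c$, then $u$ and $v$ lie in the same tree $H_c(v)=H_c(u)$, and the parity-of-distance coloring of a tree is proper (every edge joins vertices whose distances to the root differ by one), so $b_c(u)\neq b_c(v)$ and hence $b(u)\neq b(v)$. Because the edge-coloring and the ``root at the minimum-ID vertex'' rule are canonical, every copy of the LCA computes the same function $b(\cdot)$, so the outputs are mutually consistent.

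For the probe complexity, the DFS for a color $c$ probes the adjacency list of each vertex of $H_c(v)$, at most $\D$ probes each, hence at most $\D\cdot|H_c(v)|$ probes. Choose $\eps'=\tfrac12\min\{\eps,\,1-\log_\D r\}$; this is positive because $r\le\log_2\D<\D$ gives $\log_\D r<1$, and it satisfies $1/r\ge\D^{-1+\eps'}$. Applying \cref{prop:tree-chernoff} to each $G_c$ with $p=1/r$ and error parameter $\eps'$, and taking a union bound over the $r$ colors, with very high probability every component of every $G_c$ has size at most $n^{1+\log_\D(1/r)+\eps'}=n^{1-\log_\D r+\eps'}$. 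Summing over the $r$ colors, a query uses at most $r\D\,n^{1-\log_\D r+\eps'}\le n^{1-\log_\D r+\eps}$ probes once $n$ is large enough, since $r,\D$ are constants and $\eps'\le\eps/2$.

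The substantive ingredient is \cref{prop:tree-chernoff}, which we are assuming; beyond it there is no real additional obstacle. The two points that need a little care are making the per-component $2$-coloring canonical so that all copies of the LCA agree, and checking --- as above --- that the hypothesis $r\le\log_2\D$ leaves enough slack to invoke \cref{prop:tree-chernoff} with a sufficiently small error parameter $\eps'$.
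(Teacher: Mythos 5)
Your proposal is correct and follows essentially the same route as the paper: random $r$-labeling of edges via shared randomness, a canonical 2-coloring of each monochromatic component rooted at its minimum-ID vertex, the product coloring, and an application of \cref{prop:tree-chernoff} with $p=1/r$ plus a union bound over labels. Your explicit choice of $\eps'$ is just a slightly more careful version of the paper's ``take $\eps$ sufficiently small'' step, so there is nothing further to fix.
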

\begin{proof}
In our algorithm for this proof, we randomly label each edge with one of $r$ possible labels, and 2-color each connected component of each label. Then we take the product coloring of all these colorings (one coloring for each label). This algorithm is described precisely in \cref{alg:color}.

\begin{algorithm} \label{alg:color}
    \Input{Forest $G$ with $n$ vertices and maximum degree $\D$, with probe access \\
        Query vertex $v$
    }
    
    Using the shared randomness, assign an independent and uniformly randomly chosen label from $\{1, \dots, r\}$ to every pair of vertices in $G$ (so that every edge computes the same coloring).
    
    \ForEach{label $i$ in $\{1, \dots, r\}$} {
        Perform a depth-first search to find the connected component of $u$ within the edges of $G$ with label $c$.
        
        Compute the 2-coloring of this connected component (which must be a tree) such that the minimum-ID vertex has color 1.
        
        Let $c_i \in \{1, 2\}$ be the color assigned to $v$ under this coloring.
    }
    
    Output the color $(c_1, \dots, c_r)$. \tcp{Note that there are $2^r$ possible colors.}
    
    \caption{
        LCA for $2^r$-vertex-coloring a bounded-degree forest
    }
\end{algorithm}

This algorithm produces a proper $2^r$-coloring since for any $u$ and $v$, if $(u, v)$ has label $i$, then the $i$-th indices of their colors must differ.
If $\e$ is sufficiently small (as we may assume), then we have that $1/r \ge 1/\log_2 \D \ge \D^{-1+\eps/2}$, so we may apply \cref{prop:tree-chernoff}, with $\e/2$ instead of $\e$. Then, with very high probability, every connected component of every label has size at most $n^{1 - \log_\D r + \e/2}$; for this claim, we are taking a union bound over the colors. Then, the depth-first searches in total use at most $r\D n^{1 - \log_\D r + \e/2} \le n^{1 - \log_\D r + \e}$ probes, which completes the proof.

\end{proof}


\section*{Acknowledgements}
We are grateful to anonymous reviewers for many helpful comments on the earlier draft of this submission.

\bibliographystyle{alpha}
\bibliography{ref}

\newcommand{\etalchar}[1]{$^{#1}$}
\begin{thebibliography}{KMNFT20}

\bibitem[ACCL08]{ACC+08}
N.~Ailon, B.~Chazelle, S.~Comandur, and D.~Liu.
\newblock Property-preserving data reconstruction.
\newblock {\em Algorithmica}, 51(2):160--182, 2008.

\bibitem[ACLP23]{arviv2023spanner}
Rubi Arviv, Lily Chung, Reut Levi, and Edward Pyne.
\newblock Improved local computation algorithms for constructing spanners.
\newblock In {\em APPROX-RANDOM}, volume 275, pages 42:1--42:23, 2023.

\bibitem[AJMR12]{AwasthiJMR12}
Pranjal Awasthi, Madhav Jha, Marco Molinaro, and Sofya Raskhodnikova.
\newblock Limitations of local filters of lipschitz and monotone functions.
\newblock In {\em APPROX-RANDOM}, pages 374--386, 2012.

\bibitem[Alo91]{alon1991parallel}
Noga Alon.
\newblock A parallel algorithmic version of the local lemma.
\newblock {\em Random Structures \& Algorithms}, 2(4):367--378, 1991.

\bibitem[ARVX12]{alon2012space}
Noga Alon, Ronitt Rubinfeld, Shai Vardi, and Ning Xie.
\newblock Space-efficient local computation algorithms.
\newblock In {\em Proceedings of the twenty-third annual ACM-SIAM symposium on
  Discrete Algorithms}, pages 1132--1139. SIAM, 2012.

\bibitem[AS03]{AS03}
Sanjeev Arora and Madhu Sudan.
\newblock Improved low-degree testing and its applications.
\newblock {\em Combinatorica}, 23(3):365--426, 2003.

\bibitem[BAETS]{BET10}
Avraham Ben-Aroya, Klim Efremenko, and Amnon Ta-Shma.
\newblock Local list-decoding with a constant number of queries.
\newblock Technical Report TR10-047, Electronic Colloquium on Computational
  Complexity, April 2010.

\bibitem[BB20]{berglin2020simple}
Edvin Berglin and Gerth~St{\o}lting Brodal.
\newblock A simple greedy algorithm for dynamic graph orientation.
\newblock {\em Algorithmica}, 82(2):245--259, 2020.

\bibitem[BBO22]{balliu2022distributed}
Alkida Balliu, Sebastian Brandt, and Dennis Olivetti.
\newblock Distributed lower bounds for ruling sets.
\newblock {\em SIAM Journal on Computing}, 51(1):70--115, 2022.

\bibitem[Bec91]{beck1991algorithmic}
J{\'o}zsef Beck.
\newblock An algorithmic approach to the lov{\'a}sz local lemma. i.
\newblock {\em Random Structures \& Algorithms}, 2(4):343--365, 1991.

\bibitem[BEL{\etalchar{+}}22]{biswas2022massively}
Amartya~Shankha Biswas, Talya Eden, Quanquan~C Liu, Ronitt Rubinfeld, and
  Slobodan Mitrovi{\'c}.
\newblock Massively parallel algorithms for small subgraph counting.
\newblock In {\em Approximation, Randomization, and Combinatorial Optimization.
  Algorithms and Techniques (APPROX/RANDOM 2022)}. Schloss
  Dagstuhl-Leibniz-Zentrum f{\"u}r Informatik, 2022.

\bibitem[BEPS16]{barenboim2016locality}
Leonid Barenboim, Michael Elkin, Seth Pettie, and Johannes Schneider.
\newblock The locality of distributed symmetry breaking.
\newblock {\em Journal of the ACM (JACM)}, 63(3):1--45, 2016.

\bibitem[BF99]{brodal1999dynamic}
Gerth~St{\o}lting Brodal and Rolf Fagerberg.
\newblock Dynamic representations of sparse graphs.
\newblock In {\em Algorithms and Data Structures: 6th International Workshop,
  WADS’99 Vancouver, Canada, August 11--14, 1999 Proceedings 6}, pages
  342--351. Springer, 1999.

\bibitem[BF20]{blumenstock2020constructive}
Markus Blumenstock and Frank Fischer.
\newblock A constructive arboricity approximation scheme.
\newblock In {\em International Conference on Current Trends in Theory and
  Practice of Informatics}, pages 51--63. Springer, 2020.

\bibitem[BFH{\etalchar{+}}16]{brandt2016lower}
Sebastian Brandt, Orr Fischer, Juho Hirvonen, Barbara Keller, Tuomo
  Lempi{\"a}inen, Joel Rybicki, Jukka Suomela, and Jara Uitto.
\newblock A lower bound for the distributed lov{\'a}sz local lemma.
\newblock In {\em Proceedings of the forty-eighth annual ACM symposium on
  Theory of Computing}, pages 479--488, 2016.

\bibitem[BGJ{\etalchar{+}}12]{BhattacharyyaGJJRW12}
Arnab Bhattacharyya, Elena Grigorescu, Madhav Jha, Kyomin Jung, Sofya
  Raskhodnikova, and David~P. Woodruff.
\newblock Lower bounds for local monotonicity reconstruction from
  transitive-closure spanners.
\newblock {\em SIAM J. Discrete Math.}, 26(2):618--646, 2012.

\bibitem[BGM14]{bahmani2014efficient}
Bahman Bahmani, Ashish Goel, and Kamesh Munagala.
\newblock Efficient primal-dual graph algorithms for mapreduce.
\newblock In {\em International Workshop on Algorithms and Models for the
  Web-Graph}, pages 59--78. Springer, 2014.

\bibitem[BGR21]{brandt2021randomized}
Sebastian Brandt, Christoph Grunau, and V{\'a}clav Rozho{\v{n}}.
\newblock The randomized local computation complexity of the lov{\'a}sz local
  lemma.
\newblock In {\em Proceedings of the 2021 ACM Symposium on Principles of
  Distributed Computing}, pages 307--317, 2021.

\bibitem[BK16]{bodwin2016fully}
Greg Bodwin and Sebastian Krinninger.
\newblock Fully dynamic spanners with worst-case update time.
\newblock {\em arXiv preprint arXiv:1606.07864}, 2016.

\bibitem[BPR22]{biswas2022local}
Amartya~Shankha Biswas, Edward Pyne, and Ronitt Rubinfeld.
\newblock Local access to random walks, january 2022.
\newblock {\em Innovations in Theoretical Computer Science (ITCS 2022)}, 2022.

\bibitem[BPS19]{bera2019linear}
Suman~K Bera, Noujan Pashanasangi, and C~Seshadhri.
\newblock Linear time subgraph counting, graph degeneracy, and the chasm at
  size six.
\newblock {\em arXiv preprint arXiv:1911.05896}, 2019.

\bibitem[BRY20]{biswas2020local}
Amartya~Shankha Biswas, Ronitt Rubinfeld, and Anak Yodpinyanee.
\newblock Local access to huge random objects through partial sampling.
\newblock In {\em 11th Innovations in Theoretical Computer Science Conference
  (ITCS 2020)}. Schloss Dagstuhl-Leibniz-Zentrum f{\"u}r Informatik, 2020.

\bibitem[BS16]{bernstein2016faster}
Aaron Bernstein and Cliff Stein.
\newblock Faster fully dynamic matchings with small approximation ratios.
\newblock In {\em Proceedings of the twenty-seventh annual ACM-SIAM symposium
  on Discrete algorithms}, pages 692--711. SIAM, 2016.

\bibitem[CFG{\etalchar{+}}19]{chang2019complexity}
Yi-Jun Chang, Manuela Fischer, Mohsen Ghaffari, Jara Uitto, and Yufan Zheng.
\newblock The complexity of ($\delta$+ 1) coloring in congested clique,
  massively parallel computation, and centralized local computation.
\newblock In {\em Proceedings of the 2019 ACM Symposium on Principles of
  Distributed Computing}, pages 471--480, 2019.

\bibitem[CGR13]{CampagnaGR13}
Andrea Campagna, Alan Guo, and Ronitt Rubinfeld.
\newblock Local reconstructors and tolerant testers for connectivity and
  diameter.
\newblock In {\em APPROX-RANDOM}, pages 411--424, 2013.

\bibitem[Cha00]{charikar2000greedy}
Moses Charikar.
\newblock Greedy approximation algorithms for finding dense components in a
  graph.
\newblock In {\em International workshop on approximation algorithms for
  combinatorial optimization}, pages 84--95. Springer, 2000.

\bibitem[CHL{\etalchar{+}}19]{chang2019distributed}
Yi-Jun Chang, Qizheng He, Wenzheng Li, Seth Pettie, and Jara Uitto.
\newblock Distributed edge coloring and a special case of the constructive
  lov{\'a}sz local lemma.
\newblock {\em ACM Transactions on Algorithms (TALG)}, 16(1):1--51, 2019.

\bibitem[CHRT22]{christiansen2022dynamic}
Aleksander~BG Christiansen, Jacob Holm, Eva Rotenberg, and Carsten Thomassen.
\newblock On dynamic $\alpha$+ 1 arboricity decomposition and out-orientation.
\newblock In {\em 47th International Symposium on Mathematical Foundations of
  Computer Science (MFCS 2022)}. Schloss Dagstuhl-Leibniz-Zentrum f{\"u}r
  Informatik, 2022.

\bibitem[CV86]{cole1986deterministic}
Richard Cole and Uzi Vishkin.
\newblock Deterministic coin tossing with applications to optimal parallel list
  ranking.
\newblock {\em Information and Control}, 70(1):32--53, 1986.

\bibitem[DK21]{dorobisz2021local}
Andrzej Dorobisz and Jakub Kozik.
\newblock Local computation algorithms for coloring of uniform hypergraphs.
\newblock {\em arXiv preprint arXiv:2103.10990}, 2021.

\bibitem[DLRR13]{DLRR13}
A.~Dutta, R.~Levi, D.~Ron, and R.~Rubinfeld.
\newblock A simple online competitive adaptation of lempel-ziv compression with
  efficient random access support.
\newblock In {\em Proceedings of the Data Compression Conference (DCC)}, pages
  113--122, 2013.

\bibitem[ELMR17]{EvenLMR17}
Guy Even, Reut Levi, Moti Medina, and Adi Ros{\'{e}}n.
\newblock Sublinear random access generators for preferential attachment
  graphs.
\newblock In {\em 44th International Colloquium on Automata, Languages, and
  Programming, {ICALP} 2017, July 10-14, 2017, Warsaw, Poland}, pages
  6:1--6:15, 2017.

\bibitem[EMR14]{even2014deterministic}
Guy Even, Moti Medina, and Dana Ron.
\newblock Deterministic stateless centralized local algorithms for bounded
  degree graphs.
\newblock In {\em Algorithms-ESA 2014: 22th Annual European Symposium, Wroclaw,
  Poland, September 8-10, 2014. Proceedings 21}, pages 394--405. Springer,
  2014.

\bibitem[Fei90]{BF90}
D.~Beaver~J. Feigenbaum.
\newblock Hiding instances in multi-oracle queries.
\newblock In {\em Proc.\ 7th Annual STACS conference}, pages 34--48, 1990.

\bibitem[FF93]{FF93}
J.~Feigenbaum and L.~Fortnow.
\newblock Random self-reducibility of complete sets.
\newblock {\em SIAM Journal on Computing}, 22:994-- 1005, 1993.

\bibitem[FG17]{fischer2017sublogarithmic}
Manuela Fischer and Mohsen Ghaffari.
\newblock Sublogarithmic distributed algorithms for lov$\backslash$'asz local
  lemma, and the complexity hierarchy.
\newblock {\em arXiv preprint arXiv:1705.04840}, 2017.

\bibitem[FGK17]{fischer2017deterministic}
Manuela Fischer, Mohsen Ghaffari, and Fabian Kuhn.
\newblock Deterministic distributed edge-coloring via hypergraph maximal
  matching.
\newblock In {\em 2017 IEEE 58th Annual Symposium on Foundations of Computer
  Science (FOCS)}, pages 180--191. IEEE, 2017.

\bibitem[FV07]{FV07}
P.~Ferragina and R.~Venturini.
\newblock A simple storage scheme for strings achieving entropy bounds.
\newblock In {\em {ACM}-{SIAM} {S}ymposium on {D}iscrete {A}lgorithms}, pages
  690--696, 2007.

\bibitem[Gha16]{ghaffari2016improved}
Mohsen Ghaffari.
\newblock An improved distributed algorithm for maximal independent set.
\newblock In {\em Proceedings of the twenty-seventh annual ACM-SIAM symposium
  on Discrete algorithms}, pages 270--277. SIAM, 2016.

\bibitem[Gha22]{ghaffari2022local}
Mohsen Ghaffari.
\newblock Local computation of maximal independent set.
\newblock In {\em 2022 IEEE 63rd Annual Symposium on Foundations of Computer
  Science (FOCS)}, pages 438--449. IEEE, 2022.

\bibitem[GHK18]{ghaffari2018derandomizing}
Mohsen Ghaffari, David~G Harris, and Fabian Kuhn.
\newblock On derandomizing local distributed algorithms.
\newblock In {\em 2018 IEEE 59th Annual Symposium on Foundations of Computer
  Science (FOCS)}, pages 662--673. IEEE, 2018.

\bibitem[GHL{\etalchar{+}}15]{goos2015non}
Mika G{\"o}{\"o}s, Juho Hirvonen, Reut Levi, Moti Medina, and Jukka Suomela.
\newblock Non-local probes do not help with graph problems.
\newblock {\em arXiv preprint arXiv:1512.05411}, 2015.

\bibitem[GKZ08]{GKZ08}
Parikshit Gopalan, Adam~R. Klivans, and David Zuckerman.
\newblock List-decoding {R}eed {M}uller codes over small fields.
\newblock In {\em Proc.\ 40th Annual ACM Symposium on Theory of Computing},
  pages 265--274, 2008.

\bibitem[GLM19]{ghaffari2019improved}
Mohsen Ghaffari, Silvio Lattanzi, and Slobodan Mitrovi{\'c}.
\newblock Improved parallel algorithms for density-based network clustering.
\newblock In {\em International Conference on Machine Learning}, pages
  2201--2210. PMLR, 2019.

\bibitem[GLR{\etalchar{+}}91]{GLR+91}
P.~Gemmell, R.~Lipton, R.~Rubinfeld, M.~Sudan, and A.~Wigderson.
\newblock Self-testing/correcting for polynomials and for approximate
  functions.
\newblock In {\em Proc.\ 23rd Annual ACM Symposium on the Theory of Computing},
  pages 32--42, 1991.

\bibitem[GMRV20]{grunau2020improved}
Christoph Grunau, Slobodan Mitrovi{\'c}, Ronitt Rubinfeld, and Ali Vakilian.
\newblock Improved local computation algorithm for set cover via
  sparsification.
\newblock In {\em Proceedings of the Fourteenth Annual ACM-SIAM Symposium on
  Discrete Algorithms}, pages 2993--3011. SIAM, 2020.

\bibitem[GN06]{GN}
R.~Gonz\'{a}lez and G.~Navarro.
\newblock Statistical encoding of succint data structures.
\newblock In {\em Proceedings of CPM}, pages 295--306, 2006.

\bibitem[GRB22]{grunau2022landscape}
Christoph Grunau, V{\'a}clav Rozho{\v{n}}, and Sebastian Brandt.
\newblock The landscape of distributed complexities on trees and beyond.
\newblock In {\em Proceedings of the 2022 ACM Symposium on Principles of
  Distributed Computing}, pages 37--47, 2022.

\bibitem[GRS00]{GRS00}
Oded Goldreich, Ronitt Rubinfeld, and Madhu Sudan.
\newblock Learning polynomials with queries: the highly noisy case.
\newblock {\em SIAM Journal on Discrete Mathematics}, 13(4):535--570, 2000.

\bibitem[GS17]{ghaffari2017distributed}
Mohsen Ghaffari and Hsin-Hao Su.
\newblock Distributed degree splitting, edge coloring, and orientations.
\newblock In {\em Proceedings of the Twenty-Eighth Annual ACM-SIAM Symposium on
  Discrete Algorithms}, pages 2505--2523. SIAM, 2017.

\bibitem[GU19]{ghaffari2019sparsifying}
Mohsen Ghaffari and Jara Uitto.
\newblock Sparsifying distributed algorithms with ramifications in massively
  parallel computation and centralized local computation.
\newblock In {\em Proceedings of the Thirtieth Annual ACM-SIAM Symposium on
  Discrete Algorithms}, pages 1636--1653. SIAM, 2019.

\bibitem[Har19]{harris2019distributed}
David~G Harris.
\newblock Distributed local approximation algorithms for maximum matching in
  graphs and hypergraphs.
\newblock In {\em 2019 IEEE 60th Annual Symposium on Foundations of Computer
  Science (FOCS)}, pages 700--724. IEEE, 2019.

\bibitem[HKMS16]{hefetz2016polynomial}
Dan Hefetz, Fabian Kuhn, Yannic Maus, and Angelika Steger.
\newblock Polynomial lower bound for distributed graph coloring in a weak local
  model.
\newblock In {\em Distributed Computing: 30th International Symposium, DISC
  2016, Paris, France, September 27-29, 2016. Proceedings}, pages 99--113.
  Springer, 2016.

\bibitem[HKNO09]{hassidim2009local}
Avinatan Hassidim, Jonathan~A Kelner, Huy~N Nguyen, and Krzysztof Onak.
\newblock Local graph partitions for approximation and testing.
\newblock In {\em 2009 50th Annual IEEE Symposium on Foundations of Computer
  Science}, pages 22--31. IEEE, 2009.

\bibitem[HMV16]{HassidimMV16}
Avinatan Hassidim, Yishay Mansour, and Shai Vardi.
\newblock Local computation mechanism design.
\newblock {\em {ACM} Trans. Economics and Comput.}, 4(4):21:1--21:24, 2016.

\bibitem[HSV21]{harris2021locality}
David~G Harris, Hsin-Hao Su, and Hoa~T Vu.
\newblock On the locality of nash-williams forest decomposition and star-forest
  decomposition.
\newblock In {\em Proceedings of the 2021 ACM Symposium on Principles of
  Distributed Computing}, pages 295--305, 2021.

\bibitem[HTZ14]{he2014orienting}
Meng He, Ganggui Tang, and Norbert Zeh.
\newblock Orienting dynamic graphs, with applications to maximal matchings and
  adjacency queries.
\newblock In {\em Algorithms and Computation: 25th International Symposium,
  ISAAC 2014, Jeonju, Korea, December 15-17, 2014, Proceedings}, pages
  128--140. Springer, 2014.

\bibitem[IW97]{IW97}
Russell Impagliazzo and Avi Wigderson.
\newblock {\em P = B{PP}} if {{\em E}} requires exponential circuits:
  Derandomizing the {XOR} lemma.
\newblock In {\em Proc.\ 29th Annual ACM Symposium on the Theory of Computing},
  pages 220--229, 1997.

\bibitem[JR13]{JhaR13}
Madhav Jha and Sofya Raskhodnikova.
\newblock Testing and reconstruction of lipschitz functions with applications
  to data privacy.
\newblock {\em SIAM J. Comput.}, 42(2):700--731, 2013.

\bibitem[KMNFT20]{kapralov2020space}
Michael Kapralov, Slobodan Mitrovi{\'c}, Ashkan Norouzi-Fard, and Jakab Tardos.
\newblock Space efficient approximation to maximum matching size from uniform
  edge samples.
\newblock In {\em Proceedings of the Fourteenth Annual ACM-SIAM Symposium on
  Discrete Algorithms}, pages 1753--1772. SIAM, 2020.

\bibitem[KMNT19]{KapralovMNT19}
Michael Kapralov, Slobodan Mitrovic, Ashkan Norouzi{-}Fard, and Jakab Tardos.
\newblock Space efficient approximation to maximum matching size from uniform
  edge samples.
\newblock {\em CoRR}, abs/1907.05725, 2019.

\bibitem[KPS13]{KalePS13}
Satyen Kale, Yuval Peres, and C.~Seshadhri.
\newblock Noise tolerance of expanders and sublinear expansion reconstruction.
\newblock {\em SIAM J. Comput.}, 42(1):305--323, 2013.

\bibitem[KS]{KS09}
Swastik Kopparty and Shubhangi Saraf.
\newblock Local list-decoding and testing of sparse random linear codes from
  high-error.
\newblock Technical Report 115, Electronic Colloquium on Computational
  Complexity (ECCC), 2009.

\bibitem[KSS21]{kumar2021random}
Akash Kumar, C~Seshadhri, and Andrew Stolman.
\newblock Random walks and forbidden minors {III}: $\text{poly} (d
  \epsilon^{-1})$-time partition oracles for minor-free graph classes.
\newblock In {\em Electron. Colloquium Comput. Complex.}, volume~28, page~8,
  2021.

\bibitem[KT00]{KT00}
J.~Katz and L.~Trevisan.
\newblock On the efficiency of local decoding procedures for error-correcting
  codes.
\newblock In {\em Proc.\ 32nd Annual ACM Symposium on the Theory of Computing},
  pages 80--86, 2000.

\bibitem[Lip89]{L89}
R.~Lipton.
\newblock New directions in testing.
\newblock In {\em Proc.\ DIMACS Workshop on Distributed Computing and
  Cryptography}, 1989.

\bibitem[LR15]{levi2015quasi}
Reut Levi and Dana Ron.
\newblock A quasi-polynomial time partition oracle for graphs with an excluded
  minor.
\newblock {\em ACM Transactions on Algorithms (TALG)}, 11(3):1--13, 2015.

\bibitem[LRV22]{Lange2022ProperlyLM}
Jane Lange, Ronitt Rubinfeld, and Arsen Vasilyan.
\newblock Properly learning monotone functions via local reconstruction.
\newblock In {\em 2022 IEEE 63rd Annual Symposium on Foundations of Computer
  Science (FOCS)}. IEEE, 2022.

\bibitem[LRY15]{levi2015local}
Reut Levi, Ronitt Rubinfeld, and Anak Yodpinyanee.
\newblock Local computation algorithms for graphs of non-constant degrees.
\newblock In {\em Proceedings of the 27th ACM symposium on Parallelism in
  Algorithms and Architectures}, pages 59--61, 2015.

\bibitem[LRY17]{LeviRY17}
Reut Levi, Ronitt Rubinfeld, and Anak Yodpinyanee.
\newblock Local computation algorithms for graphs of non-constant degrees.
\newblock {\em Algorithmica}, 77(4):971--994, 2017.

\bibitem[MR98]{molloy1998further}
Michael Molloy and Bruce Reed.
\newblock Further algorithmic aspects of the local lemma.
\newblock In {\em Proceedings of the thirtieth annual ACM symposium on Theory
  of computing}, pages 524--529, 1998.

\bibitem[MRVX11]{MRVX12}
Y.~Mansour, A.~Rubinstein, Shai Vardi, and Ning Xie.
\newblock Converting online algorithms to local computation algorithms.
\newblock In {\em Unpublished manuscript}, 2011.

\bibitem[MSZ05]{MuthuSZ}
S.~Muthukrishnan, M.~Strauss, and X.~Zheng.
\newblock Workload-optimal histograms on streams.
\newblock Technical Report 2005-19, DIMACS Technical Report, 2005.

\bibitem[MV13]{MansourV13}
Yishay Mansour and Shai Vardi.
\newblock A local computation approximation scheme to maximum matching.
\newblock In {\em APPROX-RANDOM}, pages 260--273, 2013.

\bibitem[NO08]{NO08}
H.~N. Nguyen and K.~Onak.
\newblock Constant-time approximation algorithms via local improvements.
\newblock In {\em Proc.\ 49th Annual IEEE Symposium on Foundations of Computer
  Science}, pages 327--336, 2008.

\bibitem[NS15]{neiman2015matching}
Ofer Neiman and Shay Solomon.
\newblock Simple deterministic algorithms for fully dynamic maximal matching.
\newblock {\em ACM Trans. Algorithms}, 12(1), nov 2015.

\bibitem[PQ82]{picard1982network}
Jean-Claude Picard and Maurice Queyranne.
\newblock A network flow solution to some nonlinear 0-1 programming problems,
  with applications to graph theory.
\newblock {\em Networks}, 12(2):141--159, 1982.

\bibitem[PR07]{parnas2007approximating}
Michal Parnas and Dana Ron.
\newblock Approximating the minimum vertex cover in sublinear time and a
  connection to distributed algorithms.
\newblock {\em Theoretical Computer Science}, 381(1-3):183--196, 2007.

\bibitem[PRVY19]{parter2019spanner}
Merav Parter, Ronitt Rubinfeld, Ali Vakilian, and Anak Yodpinyanee.
\newblock Local computation algorithms for spanners.
\newblock In {\em ITCS}, volume 124, pages 58:1--58:21, 2019.

\bibitem[RS20]{rosenbaum2020seeing}
Will Rosenbaum and Jukka Suomela.
\newblock Seeing far vs. seeing wide: Volume complexity of local graph
  problems.
\newblock In {\em Proceedings of the 39th Symposium on Principles of
  Distributed Computing}, pages 89--98, 2020.

\bibitem[RTVX11]{rubinfeld2011fast}
Ronitt Rubinfeld, Gil Tamir, Shai Vardi, and Ning Xie.
\newblock Fast local computation algorithms.
\newblock {\em arXiv preprint arXiv:1104.1377}, 2011.

\bibitem[SG06]{SG}
K.~Sadakane and R.~Grossi.
\newblock Squeezing succinct data structures into entropy bounds.
\newblock In {\em {ACM}-{SIAM} {S}ymposium on {D}iscrete {A}lgorithms}, pages
  1230--1239, 2006.

\bibitem[SS10]{SS10}
M.~E. Saks and C.~Seshadhri.
\newblock Local monotonicity reconstruction.
\newblock {\em SIAM Journal on Computing}, 39(7):2897--2926, 2010.

\bibitem[STV01]{STV01}
Madhu Sudan, Luca Trevisan, and Salil Vadhan.
\newblock Pseudorandom generators without the {XOR} lemma.
\newblock {\em Journal of Computer and System Sciences}, 62(2):236--266, 2001.

\bibitem[SV20]{su2020distributed}
Hsin-Hao Su and Hoa~T Vu.
\newblock Distributed dense subgraph detection and low outdegree orientation.
\newblock In {\em 34th International Symposium on Distributed Computing}, 2020.

\bibitem[SW20]{solomon2020improved}
Shay Solomon and Nicole Wein.
\newblock Improved dynamic graph coloring.
\newblock {\em ACM Transactions on Algorithms (TALG)}, 16(3):1--24, 2020.

\bibitem[Wes88]{westermann1988efficient}
Herbert~Hans Westermann.
\newblock {\em Efficient algorithms for matroid sums}.
\newblock University of Colorado at Boulder, 1988.

\bibitem[Yek10]{Yek10}
Sergey Yekhanin.
\newblock Private information retrieval.
\newblock {\em Commun. ACM}, 53(4):68--73, 2010.

\bibitem[YYI09]{YYI09}
Y.~Yoshida, Y.~Yamamoto, and H.~Ito.
\newblock An improved constant-time approximation algorithm for maximum
  matchings.
\newblock In {\em Proc.\ 41st Annual ACM Symposium on the Theory of Computing},
  pages 225--234, 2009.

\end{thebibliography}


\end{document}